\pdfoutput=1
\documentclass[letterpaper,abstract=true,DIV=14,titlepage=false]{scrartcl}
\usepackage[T1]{fontenc}
\usepackage[utf8]{inputenc}
\usepackage[tt=false]{libertine}
\usepackage[hyphens]{url}
\usepackage{graphicx}
\usepackage{xcolor}
\usepackage{caption}
\usepackage{float,flafter,placeins}
\usepackage{booktabs}
\usepackage{amsmath,amssymb,amsthm}
\usepackage{tikz}
\usetikzlibrary{arrows.meta,positioning}
\usepackage{microtype}
\usepackage{floatsettings}

\usepackage[datamodel=acmdatamodel,style=acmauthoryear,backend=biber,
  giveninits=true,uniquename=init,uniquelist=false]{biblatex}

\usepackage[colorlinks=true,citecolor=blue,linkcolor=blue,urlcolor=blue]{hyperref}
\hypersetup{
  pdftitle={Fast Iterative Combinatorial Auctions via Deep Reinforcement Learning},
  pdfauthor={Luis F. Paul Tosta, Quinn Arnold, Gianluca Brero}
}

\newcommand{\ind}{\mathbf{1}}

\newtheorem{theorem}{Theorem}
\newtheorem{proposition}[theorem]{Proposition}

\newcommand{\authorblock}[3]{%
  \begin{minipage}[t]{0.315\textwidth}\centering\normalfont\small
    \setlength{\parskip}{0pt}%
    {\large #1\par}
    \if\relax\detokenize{#2}\relax\else
      \vspace{2pt}%
      {\small #2\par}
    \fi
    \if\relax\detokenize{#3}\relax\else
      \vspace{1pt}%
      {\footnotesize\hypersetup{urlcolor=black}\urlstyle{same}%
        \href{mailto:#3}{\nolinkurl{#3}}\par}
    \fi
    \vspace{10pt}%
  \end{minipage}}

\title{Fast Iterative Combinatorial Auctions via Deep Reinforcement Learning}
\author{%
  \begin{tabular}{@{}c@{\hspace{0.015\textwidth}}c@{\hspace{0.015\textwidth}}c@{}}
  \authorblock{Luis F. Paul Tosta}{Bryant University}{lpaul3@bryant.edu} &
  \authorblock{Quinn Arnold}{Bryant University}{qarnold@bryant.edu} &
  \authorblock{Gianluca Brero}{Bryant University}{gbrero@bryant.edu}
  \end{tabular}}
\date{October 8, 2026}

\begin{document}
\maketitle
\vspace{-20pt}

\begin{abstract}
Iterative combinatorial auctions are used in spectrum allocation, transportation, and industrial procurement, but rounds are costly.
We introduce \emph{FastCombo}, a deep reinforcement learning approach to sequential price discovery in combinatorial auctions.
Bayesian pricing has sharply reduced auction rounds even against subgradient benchmarks tuned with hindsight for each instance, but relies on a one-step objective and repeated online inference and optimization.
We prove that greedily maximizing immediate clearing probability can require arbitrarily more rounds in expectation than an informative policy, even with the true valuation distribution known.
FastCombo learns price policies offline, assigning credit to queries for the future decisions they enable and producing each online price update with a single neural-network forward pass.
We formulate the objective of minimizing expected rounds to clearing in single-minded auctions as a partially observable Markov decision process and train a recurrent bidder--item graph policy over complete auction trajectories.
In a one-item experiment, all five trained policies learn an informative opening, reducing expected rounds from 1.85 to 1.65 relative to the tested Bayesian one-step rule.
Across four domains from the Combinatorial Auction Test Suite, FastCombo achieves competitive clearing while reducing measured online price-selection time per auction by a factor of roughly 300--900 relative to the standard Bayesian baseline.
These results demonstrate the value of learning sequential price policies for both information acquisition and online computational efficiency.
\end{abstract}

\section{Introduction}
\label{sec:introduction}

Iterative combinatorial auctions are used in spectrum allocation, freight transportation, and industrial procurement~\citep{cramton2006combinatorial}.
In these auctions, the auctioneer repeatedly posts prices and observes demand to discover prices supporting an efficient allocation without eliciting complete valuation functions~\citep{ausubel2017practical}.
However, each round requires bidders to assess alternatives and make consequential bidding decisions, imposing deliberation and organizational costs~\citep{kwasnica2005new,parkes2006iterative}.
These costs can accumulate across many rounds: Canada's 2014 700\,MHz auction required 108 rounds over 22 business days~\citep{canada2014_700mhz}.
Aggressive price increments can shorten this process but risk overshooting clearing prices and sacrificing allocative efficiency~\citep{brero2019fast}.

Bayesian auctions demonstrate how learning from demand can substantially reduce the number of rounds~\citep{brero2018bayesian,brero2019fast}.
They maintain beliefs over bidder values and use those beliefs to compute candidate clearing prices.
Their performance is especially striking given the strength of the subgradient benchmarks: the instance-level oracle in \citet{brero2019fast} runs 100 step sizes and retrospectively selects the fastest clearing run for each auction.
Even with this advantage, that oracle required 21.3 rounds on average on the single-minded Arbitrary instances cleared by all compared methods, versus 4.5 for the general Bayesian rule.
Bayesian pricing thus provides a demanding benchmark for reducing bidder interaction.

Two challenges remain.
First, the Bayesian one-step pricing objective does not explicitly optimize the future value of information: a query can be useful because its response distinguishes valuations requiring different subsequent prices, even when it clears less often immediately.
Second, price updates require online inference and optimization; the general Bayesian rule uses repeated sampling and optimization within Monte Carlo expectation maximization (MCEM)~\citep{brero2019fast}.
These challenges motivate policies that account for future demand responses while keeping online price selection inexpensive.

We take a first step in single-minded auctions, where each bidder values one desired bundle.
Our objective is to minimize expected rounds to clearing over auctions drawn from a valuation distribution.
We prove that maximizing immediate clearing probability can be arbitrarily suboptimal for this objective, even when policies know the true valuation distribution.
In our construction, failed greedy queries eliminate one possible valuation profile at a time, while informative queries divide the remaining profiles in half.
We formulate this sequential objective as a partially observable Markov decision process (POMDP) and introduce \emph{FastCombo}, a recurrent graph policy trained offline with deep reinforcement learning.
Training over complete auctions allows a price to receive credit for the information it reveals and the later clearing it enables.
At deployment, a neural-network forward pass maps public feedback to the next item prices, replacing repeated online price optimization.
The contribution concerns price discovery; winner determination and final allocation requirements remain separate.

A controlled one-item experiment shows how FastCombo learns to use demand responses to choose its next query.
All five trained policies clear in 1.65 expected rounds, compared with 1.85 for the Bayesian one-step rule at the specified precision.
Their informative opening accepts a lower chance of immediate clearing to make the next query decisive when clearing fails.
A ten-seed benchmark on the Combinatorial Auction Test Suite (CATS)~\citep{leyton2000towards} then evaluates the clearing--computation tradeoff: FastCombo attains competitive clearing, while the standard Bayesian baseline uses roughly 300--900 times as much measured online price-selection computation per auction, excluding offline training.
Bayesian pricing uses fewer queries on average in three of four domains among instances cleared by all compared methods.
Both experiments use the same FastCombo architecture, with separately trained policies.
The evidence establishes informative query selection in the one-item example and a computational benefit on CATS; broader query savings and scalability to large spectrum auctions remain open.

\section{Preliminaries}
\label{sec:preliminaries}

We consider \(m\) indivisible items, indexed by \([m]=\{1,\ldots,m\}\), and \(n\) single-minded bidders, indexed by \([n]=\{1,\ldots,n\}\).
Bidder \(i\) has a desired bundle \(\varnothing\ne S_i\subseteq[m]\) and values any bundle containing \(S_i\) at \(v_i\geq0\), with zero value otherwise.
Write \(\mathcal S=(S_1,\ldots,S_n)\) for the bundle profile and \(v=(v_1,\ldots,v_n)\) for the value profile.

The auctioneer posts a common vector of nonnegative item prices \(p\in\mathbb R_+^m\), where \(p_j\) is the price of item \(j\).
Prices add across items, so bidder \(i\)'s desired bundle costs
\begin{equation}
p(S_i)=\sum_{j\in S_i}p_j.
\label{eq:bundle-cost}
\end{equation}
Under myopic best-response bidding, its demand report is
\begin{equation}
q_i(p)=
\begin{cases}
S_i,&p(S_i)\leq v_i,\\
\varnothing,&p(S_i)>v_i.
\end{cases}
\label{eq:demand-report}
\end{equation}
Thus a bidder reports its desired bundle when its value covers the quoted price, with ties favoring demand unless stated otherwise.
A zero-price demand query would reveal the bundle profile because \(q_i(0)=S_i\) for every bidder.
Our evaluated protocol supplies \(\mathcal S\) directly before the counted rounds; values remain private.

An efficient allocation maximizes total bidder value while assigning each item at most once.
Letting \(x_i\) indicate whether bidder \(i\) receives \(S_i\), efficient welfare is
\begin{equation}
W^\star(\mathcal S,v)
=\begin{aligned}[t]
\max_{x\in\{0,1\}^n}\quad &\sum_{i=1}^{n}v_i x_i\\
\text{subject to}\quad &\sum_{i:j\in S_i}x_i\leq1
\quad\forall j\in[m].
\end{aligned}
\label{eq:wdp}
\end{equation}
To assess whether prices support an efficient allocation, we use the clearing potential of \citet{brero2019fast}.
At prices \(p\), bidder \(i\)'s maximum utility is \(\max\{0,v_i-p(S_i)\}\): it can buy its bundle or abstain.
The clearing potential adds these utilities to the sum of posted item prices,
\begin{equation}
L(p;\mathcal S,v)
=\sum_{i=1}^{n}\max\{0,v_i-p(S_i)\}+\sum_{j=1}^{m}p_j.
\label{eq:clearing-objective}
\end{equation}
This quantity upper-bounds the welfare of every feasible allocation, so
\(G(p;\mathcal S,v):=L(p;\mathcal S,v)-W^\star(\mathcal S,v)\geq0\).
At zero gap, an efficient allocation gives every bidder a utility-maximizing choice and leaves only zero-price items unsold; an indifferent bidder can receive either optimal choice.
This is the welfare and linear-programming duality characterization of clearing prices~\citep{bikhchandani2002package,brero2018bayesian}.
Linear clearing prices need not exist for every bundle profile and valuation vector.
Appendix~\ref{app:clearing} gives the gap decomposition.
For numerical evaluation, we count a posted vector as clearing when
\begin{equation}
G(p;\mathcal S,v)<\varepsilon_{\mathrm{clr}},
\label{eq:clearing}
\end{equation}
for a fixed tolerance \(\varepsilon_{\mathrm{clr}}>0\).

At each round, a price policy \(\pi\) chooses an item-price vector using the public bundle profile and preceding price--demand observations.
Feedback consists of the new demand reports and whether the prices clear.
Each posted vector constitutes one demand query and one auction round; clearing ends the interaction.
Our objective is to reach clearing in as few rounds as possible.

\section{The Value of Informative Queries}
\label{sec:informative-theory}

Fix the public bundle profile \(\mathcal S\) and draw values \(v\) once from a distribution \(F\), keeping them fixed throughout the auction.
In the analytical comparisons below, both policies know the true distribution \(F\) and observe demand and clearing feedback; the realized values remain hidden.
For exact clearing, \(G=0\), let \(\tau_\pi\) count rounds up to and including the clearing round under policy \(\pi\), with \(\tau_\pi=\infty\) if clearing never occurs.
The expected number of rounds to clearing is
\begin{equation}
J_F(\pi)=\mathbb E_{v\sim F,\pi}[\tau_\pi].
\label{eq:expected-rounds}
\end{equation}
The expectation averages over values drawn from \(F\) and any policy randomization.
A \emph{greedy clearing policy} maximizes the next query's posterior clearing probability after each public history.
The following result shows that this criterion can be arbitrarily suboptimal for expected rounds to clearing, even when \(F\) is known exactly.

\begin{proposition}[The cost of greedy clearing]
\label{thm:greedy-separation}
For every integer \(K\geq2\), there is a one-item, two-bidder auction with values and prices in \([0,1]\) and a distribution \(F\) supported on \(K\) value profiles such that, when \(v\sim F\),
\(J_F(\pi_{\mathrm{greedy}})=(K+2)/3\) for every greedy clearing policy, while some deterministic informative policy satisfies
\(J_F(\pi_{\mathrm{info}})\leq\lceil\log_2(K+1)\rceil\).
Both policies know \(F\), observe demand and exact-clearing feedback, and count one round for every posted price.
Thus the ratio of greedy to optimal expected round counts is unbounded over this family.
\end{proposition}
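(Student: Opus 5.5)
The plan is to build the separating family directly, using the fact that with one item the clearing set of each value profile is an interval, and to choose disjoint intervals, ordered left to right by decreasing probability.

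\emph{Clearing sets.} With \(m=1\), \(S_1=S_2=\{1\}\) and values \(v=(v_1,v_2)\), we have \(W^\star=\max\{v_1,v_2\}\). I would first check from~\eqref{eq:clearing-objective} that
\[
L(p;\mathcal S,v)=\max\{0,v_1-p\}+\max\{0,v_2-p\}+p ,
\]
so \(G(p)=0\) exactly when \(\min\{v_1,v_2\}\le p\le\max\{v_1,v_2\}\). If \(p\) is above the larger value, \(L=p>W^\star\). If \(p\) is below the smaller value, \(L\) exceeds \(W^\star\) by the smaller value minus \(p\).

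\emph{Construction.} For \(k=1,\dots,K\), let profile \(k\) be \(v^{(k)}=\bigl(\tfrac{2k-1}{2K},\tfrac{2k-1}{2K}+\tfrac{1}{4K}\bigr)\). Its clearing set \(I_k\) is a closed interval, the \(I_k\) are pairwise disjoint, and \(I_1<I_2<\dots<I_K\) inside \([0,1]\). Assign probabilities
\[
q_k=\frac{2(K+1-k)}{K(K+1)} ,
\]
which are strictly decreasing in \(k\) and sum to one.

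\emph{Greedy analysis.} Any posted price clears at most one profile, since the \(I_k\) are disjoint. Hence the posterior clearing probability of a price is the posterior mass of the unique profile whose interval contains it, or zero. I would argue by induction that after \(k-1\) failed rounds the public history has eliminated exactly profiles \(1,\dots,k-1\) and is otherwise uninformative.

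At that point the posterior is proportional to \(q_k,\dots,q_K\). Because the \(q_j\) are strictly decreasing, every greedy policy posts some price \(p\in I_k\). If this fails, every remaining profile \(j>k\) has both values above \(p\). So both bidders demand under every remaining profile, and the demand report carries no information beyond eliminating \(k\). This is the key point: the left-to-right ordering by probability is what blocks greedy from learning from the demand reports.

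So greedy clears profile \(k\) at round \(k\), and
\[
J_F(\pi_{\mathrm{greedy}})=\sum_k k\,q_k=\frac{2}{K(K+1)}\sum_{k=1}^K k(K+1-k).
\]
Using \(\sum_k k(K+1-k)=\tfrac{K(K+1)^2}{2}-\tfrac{K(K+1)(2K+1)}{6}\), this equals \((K+1)-\tfrac{2K+1}{3}=\tfrac{K+2}{3}\). The argument does not depend on which point of \(I_k\) is chosen, so it holds for every greedy clearing policy.

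\emph{Informative policy.} This policy is a deterministic binary search over the sorted intervals. Given a remaining contiguous set \(\{a,\dots,b\}\), it posts a price in \(I_c\) with \(c=\lfloor(a+b)/2\rfloor\). Clearing ends the auction. Otherwise the demand count is two if the true profile lies to the right of \(c\) and zero if it lies to the left. This is because a price below \(\min v^{(j)}\) draws both bidders and a price above \(\max v^{(j)}\) draws neither. So the search recurses on one side.

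This is a binary search tree on \(K\) nodes of depth \(\lceil\log_2(K+1)\rceil\). Hence every profile clears within that many rounds, and the expectation is bounded by the worst case. The unbounded ratio then follows from \((K+2)/3\) growing linearly in \(K\) against a logarithmic bound.

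The main obstacle I expect is the greedy step, specifically proving that no greedy choice gains information from demand. The placement ordering handles this, but one must double-check ties at interval endpoints; disjoint closed intervals with ties favoring demand avoid this issue.
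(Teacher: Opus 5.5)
Your proof is correct and is essentially the paper's argument reflected left to right. The paper makes the highest-valued profile the most likely (\(\rho_k\propto k\), \(I_k=[\tfrac{2k-1}{2K},\tfrac{k}{K}]\)), so greedy failures produce zero demand; you make the lowest-valued profile the most likely, so failures produce full demand, and the binary-search policy and the sum \(\sum_k k(K+1-k)\) are otherwise the same up to reindexing. One small correction to your closing remark: the strict gaps between intervals, not the ties-favoring-demand convention, rule out endpoint ties, so your argument holds under either demand convention, as the paper notes.
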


The construction assigns disjoint clearing-price intervals to increasingly likely value profiles.
Greedy tests the most likely remaining profile; on failure, every remaining profile produces the same demand, so only one possibility is eliminated.
An informative query instead tests a middle profile, and a failed response identifies which half contains the true values.
This yields linear versus logarithmic growth in rounds; Appendix~\ref{app:greedy-separation} gives the construction and proof.

Both expected round counts average over values genuinely drawn from the same stated distribution.
The result establishes that greedy can be suboptimal; it does not assert this for every distribution or quantify the loss on typical auctions.
This motivates optimizing whole query sequences, which reinforcement learning can approximate from sampled auctions without a guarantee of finding the optimal policy.

\section{The Auction as a POMDP}
\label{sec:pomdp}

We formulate the objective of minimizing expected rounds to clearing as a partially observable Markov decision process (POMDP), enabling reinforcement learning to assign credit to queries for the later decisions they support.
For each auction domain \(d\), an instance \(\xi=(\mathcal S,v)\) is drawn from a distribution \(\nu_d\) and remains fixed throughout the auction.
Conditional on \(\mathcal S\), this induces a valuation distribution of the form \(F\) above.
The bundle profile is public and the values are hidden.
The nonterminal state comprises this instance and the public round counter; clearing leads to an absorbing terminal state.

After \(t\) rounds, the next action posts \(p_{t+1}\in[0,p_d^{\max}]^m\), where \(p_d^{\max}>0\) is the domain's public price cap.
Let \(\ind\{A\}\) equal one if event \(A\) holds and zero otherwise.
With bundles known, binary demand is \(y_{i,t+1}=\ind\{q_i(p_{t+1})=S_i\}\); write \(y_t=(y_{1,t},\ldots,y_{n,t})\) for the demand vector.
The response also includes the clearing indicator
\(\chi_{t+1}=\ind\{G(p_{t+1};\mathcal S,v)<\varepsilon_{\mathrm{clr}}\}\).
The public history is therefore
\begin{equation}
\mathcal H_t=
\bigl(\mathcal S,(p_r,y_r,\chi_r)_{r=1}^{t}\bigr),
\qquad \mathcal H_0=(\mathcal S).
\label{eq:public-history}
\end{equation}
Observations and termination are deterministic conditional on the instance and the posted price.
A policy \(\pi(\cdot\mid\mathcal H_t)\) chooses the next price distribution from this observable history.

The natural cost of an auction round is one query.
Let \(\tau_\pi=\inf\{t\geq1:\chi_t=1\}\), with \(\tau_\pi=\infty\) if clearing never occurs.
For a round budget \(H\), define \(Q_H(\pi)=\min\{\tau_\pi,H\}\), so an unsuccessful auction is charged all \(H\) queries.
An undiscounted reward of \(-1\) for each posted price, including the clearing query, gives
\begin{equation}
\mathbb E_{\xi\sim\nu_d,\pi}\!\left[\sum_{t=0}^{Q_H(\pi)-1}r_t\right]
=-\mathbb E_{\xi\sim\nu_d,\pi}[Q_H(\pi)],
\qquad r_t=-1.
\label{eq:round-reward}
\end{equation}
Thus maximizing this return minimizes expected charged rounds.
We also report clearing probability, since a clear at the budget and a failure incur the same capped query count.
The experiments specify when training uses this unit query cost or a dense surrogate and how it handles the round limit.
Private simulator values can supply training feedback, but neither these values nor realized rewards enter the policy's public observations.

\section{FastCombo Policy}
\label{sec:policy}

FastCombo approximates a policy over public histories with recurrent item states and a graph connecting bidders to their desired items.
The recurrence summarizes successive price--demand responses; the graph allows prices to depend on conflicts between overlapping bundles.
We train the policy offline against simulated auctions and deploy it with one neural-network evaluation per query.
Appendix~\ref{app:policy-details} specifies the implementation.

\paragraph{Public features and memory.}
For item \(j\), aggregate demand is
\(D_{t,j}=\sum_{i:j\in S_i}y_{i,t}\).
Demand also yields bounds on each bidder's value: accepting a bundle tightens its lower bound using the quoted price, and rejecting it tightens its upper bound.
The bounds start at \([0,C_d]\), where \(C_d>0\) is a public valuation ceiling with \(v_i\leq C_d\).
We normalize value bounds by \(C_d\), prices by \(p_d^{\max}\), and item demand by \(n\).
A shared gated recurrent unit (GRU) maintains 32-dimensional item memory \(h_{t,j}\) from successive responses~\citep{cho2014gru}:
\begin{equation}
h_{t,j}=\operatorname{GRU}((\bar p_{t,j},\bar D_{t,j}),h_{t-1,j}),
\quad t\geq1,\qquad h_{0,j}=0.
\label{eq:recurrence-main}
\end{equation}
Bars denote normalized quantities.
The reset input precedes any response, so it does not update the recurrent state.
Neither private values nor realized training rewards enter the actor's inputs or memory.

\paragraph{Bundle graph and prices.}
Shared encoders form a 128-dimensional token for each bidder from its normalized value bounds and bundle size, and for each item from its recurrent state, the normalized round \(t/H\), and the fraction of bidders whose bundles contain it.
The graph has an edge \((i,j)\) exactly when \(j\in S_i\)~\citep{gilmer2017mpnn}.
Each of two graph passes first updates an item from its own token and the mean of its neighboring bidder tokens, then updates a bidder from its token and the mean of its newly updated neighboring item tokens.
The passes have distinct weights, shared across nodes within a pass.
The second pass conveys conflicts through overlapping bundles; these internal computations add no auction rounds.
Mean aggregation and shared bidder maps make the actor invariant to bidder order.

A shared decoder maps each final item token to a scalar \(\mu_{t,j}\).
During training, independent Normal draws with these locations and learned scales are clipped to \([-1,1]\); proximal policy optimization (PPO) scores the resulting censored distribution, including its boundary probabilities.
Evaluation is deterministic and posts
\begin{equation}
p_{t+1,j}=\frac{p_d^{\max}}{2}
\left(1+\operatorname{clip}_{[-1,1]}(\mu_{t,j})\right).
\label{eq:deterministic-price-main}
\end{equation}
Here \(\operatorname{clip}_{[a,b]}(z)=\min\{b,\max\{a,z\}\}\) truncates \(z\) to \([a,b]\).
Each call emits one simultaneous price vector, evaluated directly without price repair or a second stage.

\section{Related Work}
\label{sec:related}

Iterative combinatorial auctions use demand feedback to refine prices.
The combinatorial clock auction follows excess demand, while iBundle uses bundle-price updates~\citep{ausubel2017practical,parkes1999ibundle}.
\citet{brero2018bayesian} introduce a Bayesian clearing mechanism that maintains posterior beliefs over values and computes Bayesian price updates; \citet{brero2019fast} extend this approach to general valuation distributions with Monte Carlo expectation maximization.
\citet{soumalias2024mlcca,soumalias2025mlhca} replace explicit Bayesian posteriors with per-bidder neural value models but still solve an optimization problem after updating those models.
Our policy instead amortizes the price-update computation into offline training.

Deep learning has also been used to design single-shot auctions~\citep{duetting2019optimal,feng2018deep,rahme2021auction,curry2022differentiable}, and reinforcement learning has been applied to sequential posted-price mechanisms and broader economic design problems~\citep{eden2021reinforcement,gerstgrasser2026stackelberg,ravindranath2024deep}.
Those mechanisms typically visit agents sequentially or optimize revenue.
We study common item prices posted simultaneously to all bidders and use aggregate demand to seek exact market clearing.
A separate line uses multi-agent learning to study bidder strategies under a fixed auction~\citep{deon2024understanding}; here the bidder response rule is fixed and the auctioneer learns.

\section{Experiments}
\label{sec:experiments}

We test whether learned price queries acquire useful information and how learned price selection trades off clearing performance against online computation.
A controlled one-item experiment with three value profiles makes the information mechanism visible: the small state space lets us verify that every trained policy learns an informative opening rather than a greedy one.
We then evaluate on the Combinatorial Auction Test Suite (CATS)~\citep{leyton2000towards}, a standard benchmark for iterative combinatorial auctions that generates instances across four structured domains (Arbitrary, Paths, Regions, and Scheduling), where FastCombo achieves competitive clearing with substantially lower online price-selection computation.
Architecture ablations assess the contributions of the graph encoder and the recurrent state by replacing each component under matched training and evaluation.

Our primary baseline is the Bayesian auction of \citet{brero2019fast}, denoted Bayes\(^{\mathrm{GEN}}\).
At each round, it maintains a posterior over bidder values and selects the price vector maximizing a one-step Bayesian score,
\(\Psi_\lambda(p\mid\mathcal H_t)=\mathbb E[\exp\{-\lambda\, G(p;v)\}\mid\mathcal H_t]\),
where \(G\) is the clearing gap and \(\lambda>0\) is a precision parameter that controls the sharpness of the gap penalty.
At large \(\lambda\), the score approaches exact-clearing probability; at moderate \(\lambda\), it favors prices with small but nonzero gaps, broadening the search.
Throughout our experiments we use \(\lambda=1\), matching the default of \citet{brero2019fast}.
On CATS, posterior updates use Monte Carlo expectation maximization (MCEM) with fitted Gaussian-process priors, requiring repeated sampling and optimization at every round.
The one-item comparator instead receives the exact three-profile distribution and computes the exact posterior, so its 1.85-round cost reflects the one-step objective itself rather than inference error.

Both the one-item and CATS experiments use the recurrent graph actor and the privileged critic described in Section~\ref{sec:policy} and Appendix~\ref{app:policy}.
We train these models separately with PPO and generalized advantage estimation~\citep{schulman2017proximal,schulman2016high}, building on Stable-Baselines3~\citep{raffin2021stable}.
The critic uses private values during training~\citep{pinto2018asymmetric}; these values remain absent from the public actor path.
Evaluation is deterministic and retains every prespecified completed training seed.
Each posted price vector counts as one round.
Evaluation stops at clearing or the query limit.

\subsection{Learning Informative Price Queries}
\label{sec:informative-query}

We test whether FastCombo learns to trade immediate clearing probability for information that helps it clear on the next query.

\paragraph{Setting and training.}
There is one item with scalar price \(p\) and two bidders with joint values \((2,1)\), \((4,3)\), or \((6,5)\), with probabilities \(.25\), \(.35\), and \(.40\).
Training and evaluation use this same distribution, and the Bayesian comparator receives it exactly, uses precision \(\lambda=1\), and has no second stage.
Public value and price ceilings are both 6.
Demand uses \(v_i>p\), breaking ties in favor of no demand.
Numerical clearing uses Equation~\eqref{eq:clearing} with raw tolerance \(10^{-3}\).
Five independent seeds each train for 200,704 transitions with reward \(-1\) per query, discount \(\gamma=1\), and a terminal cap of \(H=3\) rounds with no continuation bootstrap.
Every terminal actor is evaluated without checkpoint selection, and expectations use the exact profile probabilities.
Appendix~\ref{app:toy-training} gives the remaining optimization details.

\paragraph{Learned query savings.}
All five independently trained FastCombo policies clear in 1.65 expected rounds, compared with 1.85 for the Bayesian rule with \(\lambda=1\).
Every learned policy clears all three value profiles within two rounds, and every terminal price satisfies \(G=0\).
The improvement comes from the first query: it reveals which price to post next when immediate clearing fails.

Figure~\ref{fig:toy-learning-measured} shows all ten saved evaluations for each of the five training seeds.

\begin{figure}[H]
\centering
\includegraphics[width=\textwidth]{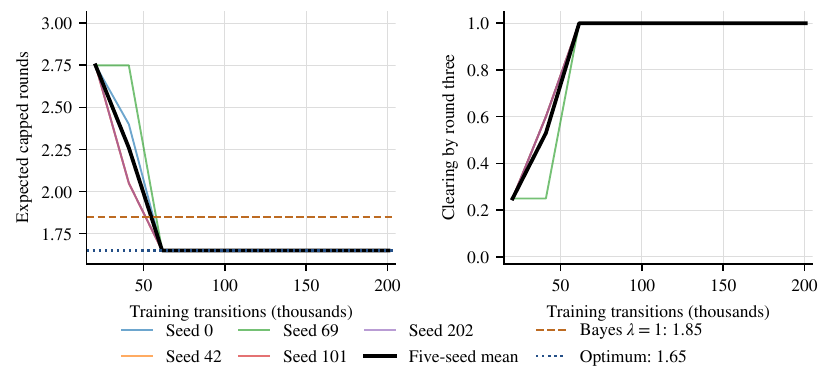}
\caption{Measured training curves from all five original conceptual runs. Left: expected capped rounds under exact profile weights $(.25,.35,.40)$, with failures charged three rounds. Right: clearing probability by round three. Thin lines show individual seeds and the thick line their mean. Dashed references show Bayes $\lambda=1$ (1.85) and the analytical optimum (1.65). Each seed contributes all ten saved evaluations, including its terminal actor.}
\label{fig:toy-learning-measured}
\end{figure}

\paragraph{Interpreting the learned policy.}
The zero-gap price intervals for the low, middle, and high profiles are \([1,2]\), \([3,4]\), and \([5,6]\).
Across the five seeds, learned first prices range from 3.429 to 3.628, clearing the middle profile with probability .35.
Otherwise, zero demanding bidders identifies the low profile and two identifies the high profile.
The respective second prices lie in \([1.479,1.542]\) and \([5.864,6.000]\), clearing the identified profile; all reported prices are rounded to three decimals.

The Bayesian rule instead first posts \(p=5\), clearing the high profile with probability .40.
On failure, both remaining profiles report no demand, so the response does not distinguish them.
The rule next tests the more likely middle profile at \(p=3\), and the low profile requires a third query at \(p=1\).
Greedy clearing tests the profiles in the same order.
FastCombo therefore accepts a five-percentage-point reduction in immediate clearing probability to save .20 rounds in expectation.

\FloatBarrier
\subsection{CATS: Clearing and Online Computation}
\label{sec:cats}

\paragraph{Environment and training.}
We evaluate Arbitrary, Paths, Regions, and Scheduling from the Combinatorial Auction Test Suite (CATS)~\citep{leyton2000towards}, with ten single-minded bidders and twelve items.
Demand follows the weak tie convention in Section~\ref{sec:preliminaries}, and numerical clearing uses Equation~\eqref{eq:clearing} with raw tolerance \(10^{-3}\).
Each domain has ten prespecified seeds, each trained for eight million PPO steps with the original continuous-price FastCombo policy and a cap of \(H=50\) rounds.

Training uses a dense normalized-gap reward to distinguish unsuccessful price vectors.
Writing \(\Delta(\mathcal S,v)=\sum_i v_i-W^\star(\mathcal S,v)=G(0;\mathcal S,v)\), the reward after posting \(p_{t+1}\) is
\begin{equation}
r_t=\begin{cases}
-\max\{0,G(p_{t+1};\mathcal S,v)/\Delta(\mathcal S,v)\},
&\Delta(\mathcal S,v)\geq10^{-8},\\
0,&\text{otherwise}.
\end{cases}
\label{eq:dense-reward-main}
\end{equation}
We optimize discounted return with \(\gamma=.99\); this is a surrogate for the query-count objective in Equation~\eqref{eq:round-reward}.
Clearing terminates training episodes, while unsuccessful time limits use a critic continuation estimate.
We select among 50 checkpoints using clearing rate on a fixed 1,000-instance development population, breaking ties toward the earlier checkpoint, and evaluate on a separate fixed 5,000-instance holdout.
Appendices~\ref{app:training-pomdp} and~\ref{app:training-curves} give the remaining optimization settings and development curves.

\paragraph{Baselines and comparisons.}
Both CATS studies use the default fitted-prior general-distribution rule of \citet{brero2019fast}, Bayes\(^{\mathrm{GEN}}\) (Bayes-GEN in tables), with precision \(\lambda=1\) and the original second stage.
The clearing benchmark also includes subgradient auctions, which adjust prices in response to excess demand with a scalar step-size coefficient.
SG-Auction\(^{D}\) (SG-D) retrospectively selects one coefficient per domain; SG-Auction\(^{I}\) (SG-I) selects it separately for each instance.
Both are optimistic oracles within the tested coefficient family and retain their second stages; their tuning uses evaluation outcomes, whereas FastCombo checkpoint selection uses development auctions.
We report full-holdout clearing rates and compare rounds on instances cleared by every included method.
The linear-price ceiling is the fraction of instances for which an LP dual-optimality test certifies a clearing vector under the same price cap and tolerance.
Paired bootstrap comparisons account for source files and FastCombo training seeds, with baseline fits and coefficient choices fixed; Appendix~\ref{app:baselines} gives the protocol.

\paragraph{Clearing and rounds.}

FastCombo trails Bayes\(^{\mathrm{GEN}}\) by 0.36--2.54 percentage points in clearing across domains (Table~\ref{tab:main-results}).
The paired 95\% confidence interval for the Paths difference includes zero; the intervals for the other three domains favor Bayes\(^{\mathrm{GEN}}\).
FastCombo exceeds the domain-level SG-Auction\(^{D}\) oracle in every domain. It exceeds the instance-level SG-Auction\(^{I}\) oracle on Arbitrary and Regions, but trails it by 1.54 points on Paths and 0.33 on Scheduling.
Across all four domains, FastCombo remains within 3.54 percentage points of the linear-price ceiling.
The clearing and computation comparisons use the same FastCombo policies.

\begin{table}[H]
\centering
\caption{CATS mean clearing (\%) and rounds. Rounds use, for each FastCombo seed, the instances cleared by all four methods. Standard errors, paired intervals, and subset sizes are in Appendix Tables~\ref{tab:clearing-uncertainty} and~\ref{tab:common-cleared-queries}.}
\label{tab:main-results}
\small
\setlength{\tabcolsep}{3pt}
\renewcommand{\arraystretch}{1.05}
\begin{tabular}{@{}l*{4}{rr}@{}}
\toprule
 & \multicolumn{2}{c}{Arbitrary} & \multicolumn{2}{c}{Paths} & \multicolumn{2}{c}{Regions} & \multicolumn{2}{c}{Scheduling} \\
\cmidrule(lr){2-3}\cmidrule(lr){4-5}\cmidrule(lr){6-7}\cmidrule(lr){8-9}
Method & Clear $\uparrow$ & Rounds $\downarrow$ & Clear $\uparrow$ & Rounds $\downarrow$ & Clear $\uparrow$ & Rounds $\downarrow$ & Clear $\uparrow$ & Rounds $\downarrow$ \\
\midrule
FastCombo & 68.74 & 6.59 & 95.10 & 7.95 & 94.23 & 4.80 & 97.59 & 4.64 \\
Bayes-GEN & 71.28 & 3.45 & 95.46 & 4.54 & 95.54 & 3.32 & 98.36 & 4.86 \\
SG-D & 42.46 & 29.78 & 80.38 & 16.93 & 74.86 & 21.99 & 88.28 & 16.32 \\
SG-I & 60.12 & 18.21 & 96.64 & 7.25 & 92.04 & 13.37 & 97.92 & 9.15 \\
\midrule
Linear-price ceiling & 72.28 & \textemdash & 97.48 & \textemdash & 96.22 & \textemdash & 99.90 & \textemdash \\
\bottomrule
\end{tabular}
\end{table}

The round counts in Table~\ref{tab:main-results} use, for each FastCombo seed, the instances cleared by that actor and all three baselines, averaging the ten resulting means.
On these matched success sets, Bayes uses fewer rounds in three domains; FastCombo uses fewer on Scheduling (4.64 versus 4.86).
FastCombo uses fewer rounds than SG-Auction\(^{D}\) in every domain and SG-Auction\(^{I}\) in three.
These counts describe the common-cleared subsets; Appendix Table~\ref{tab:common-cleared-queries} gives their sizes and uncertainty details.

\paragraph{Online computation.}
\label{sec:computation}

The computation study uses 100 auctions per domain and the same assigned CPU within each paired comparison.
It measures online price selection, excluding offline training, observation construction, clearing evaluation, and communication; Appendix~\ref{app:computation} gives the timing protocol.

FastCombo takes 4.97--35.44 milliseconds of measured price-selection computation per auction; the default Bayes\(^{\mathrm{GEN}}\) baseline costs roughly 300--900 times as much (Table~\ref{tab:compute-paired}).
On this 100-auction-per-domain timing sample, FastCombo clears 0.9--2.8 percentage points fewer auctions and uses 0.02--4.30 more capped rounds on average.
These round means include every auction, charging failures all 50 rounds; Table~\ref{tab:main-results} instead reports rounds on common-cleared subsets of the 5,000-instance holdout.

Total auction duration also includes bidder-response delays, which are excluded from the measured price-selection computation.

\begin{table}[H]
\centering
\caption{Default Bayesian comparison on 100 auctions per domain. Entries show paired 95\% intervals from 20,000 joint source-file and FastCombo-seed bootstrap draws. Costs are Bayes/FastCombo; differences are FastCombo minus Bayes; failures cost 50 rounds.}
\label{tab:compute-paired}
\small
\begin{tabular}{lrrr}
\toprule
Domain & Cost ratio & Clearing diff. (pp) & Capped rounds diff. \\
\midrule
Arbitrary & 624.0 [550.1, 696.1] & -0.90 [-2.30, -0.10] & 3.80 [2.68, 5.04] \\
Paths & 304.0 [226.6, 397.2] & -2.10 [-5.00, -0.10] & 4.30 [3.20, 5.54] \\
Regions & 571.7 [497.3, 649.4] & -1.80 [-4.20, -0.20] & 2.34 [1.41, 3.43] \\
Scheduling & 871.1 [687.1, 1064.7] & -2.80 [-6.10, -0.30] & 0.02 [-1.07, 1.29] \\
\bottomrule
\end{tabular}
\end{table}

\subsection{Architecture Ablations}
\label{sec:ablation-results}
Under matched training, replacing the structured encoder with a recurrent dense control reduces clearing by 6.60--18.52 percentage points, while removing recurrence reduces it by 2.68--5.33 points.
All architecture controls use the CATS training budget, normalized-gap reward (Equation~\eqref{eq:dense-reward-main}), discount \(\gamma=.99\), optimization settings, and checkpoint-selection protocol from Section~\ref{sec:cats}.
Table~\ref{tab:architecture-main} gives the corresponding clearing rates.
For both controls, the paired 95\% intervals favor FastCombo in all four domains.
Appendix~\ref{app:ablations} reports the control specification, paired intervals, and remaining ablations; the encoder comparison supports the complete representation under the tested training protocol.

\begin{table}[H]
\centering
\caption{Architecture controls: mean clearing (\%) and rounds. Clearing averages ten seeds, except no-recurrence Paths (nine). Rounds use instances cleared by all three architectures, with nine matched seeds for Paths and ten elsewhere. These success sets differ from the four-method sets in Table~\ref{tab:main-results}, so FastCombo's round means also differ. Clearing retains the original full-holdout seed counts. Round uncertainty is in Table~\ref{tab:architecture-common-rounds}; paired clearing intervals are in Appendix Table~\ref{tab:ablation-effects}.}
\label{tab:architecture-main}
\small
\setlength{\tabcolsep}{3pt}
\renewcommand{\arraystretch}{1.05}
\begin{tabular}{@{}l*{4}{rr}@{}}
\toprule
 & \multicolumn{2}{c}{Arbitrary} & \multicolumn{2}{c}{Paths} & \multicolumn{2}{c}{Regions} & \multicolumn{2}{c}{Scheduling} \\
\cmidrule(lr){2-3}\cmidrule(lr){4-5}\cmidrule(lr){6-7}\cmidrule(lr){8-9}
Method & Clear $\uparrow$ & Rounds $\downarrow$ & Clear $\uparrow$ & Rounds $\downarrow$ & Clear $\uparrow$ & Rounds $\downarrow$ & Clear $\uparrow$ & Rounds $\downarrow$ \\
\midrule
FastCombo & 68.74 & 7.38 & 95.10 & 7.40 & 94.23 & 5.29 & 97.59 & 4.65 \\
Recurrent dense control & 56.13 & 20.95 & 76.58 & 21.77 & 87.64 & 13.47 & 90.80 & 12.63 \\
No recurrence & 63.80 & 9.85 & 89.66 & 8.01 & 91.28 & 6.29 & 94.92 & 5.17 \\
\bottomrule
\end{tabular}
\end{table}

\section{Scope and Limitations}
\label{sec:limitations}

We study single-minded bidders with public bundles and myopic demand.
The simulator's rewards, critic, and clearing test use private values; public clearing certificates and strategyproof allocation and payment rules remain separate mechanism-design requirements~\citep{ausubel2006efficient,mishra2007ascending,gul2000english}.
A finite GRU may lose relevant history, and fixed-size, separately trained CATS domains leave scaling and transfer untested.
The one-item experiment establishes informative learning on a three-profile distribution under the specified Bayesian objective and precision.
It does not establish the same information-gathering mechanism on CATS or a general query advantage across auctions.
The computation measurements exclude offline training and bidder-response costs.

\section{Conclusion}
\label{sec:conclusion}

FastCombo combines training over complete auction trajectories with inexpensive online price evaluation.
Greedy clearing can be arbitrarily suboptimal in expected rounds even under a known true distribution, because information from an unsuccessful query can improve later decisions.
The one-item experiment shows how learned openings use demand to distinguish the remaining value profiles and reduce subsequent queries; CATS demonstrates competitive clearing with substantially lower measured price-selection computation.
Broader round savings and extensions to larger markets, multi-minded and spectrum valuations, and strategic bidders remain open.

\printbibliography

@article{ausubel2006efficient,
  author    = {Ausubel, Lawrence M.},
  title     = {An Efficient Dynamic Auction for Heterogeneous Commodities},
  journal   = {The American Economic Review},
  volume    = {96},
  number    = {3},
  pages     = {602--629},
  year      = {2006}
}

@article{bikhchandani2002package,
  author  = {Bikhchandani, Sushil and Ostroy, Joseph M.},
  title   = {The Package Assignment Model},
  journal = {Journal of Economic Theory},
  volume  = {107},
  number  = {2},
  pages   = {377--406},
  year    = {2002}
}

@inproceedings{cho2014gru,
  author    = {Cho, Kyunghyun and van Merri{\"e}nboer, Bart and Gulcehre, Caglar and Bahdanau, Dzmitry and Bougares, Fethi and Schwenk, Holger and Bengio, Yoshua},
  title     = {Learning Phrase Representations Using {RNN} Encoder--Decoder for Statistical Machine Translation},
  booktitle = {Proceedings of the 2014 Conference on Empirical Methods in Natural Language Processing},
  pages     = {1724--1734},
  year      = {2014}
}

@inproceedings{brero2018bayesian,
  author    = {Brero, Gianluca and Lahaie, S\'{e}bastien},
  title     = {A {B}ayesian Clearing Mechanism for Combinatorial Auctions},
  booktitle = {Proceedings of the 32nd AAAI Conference on Artificial Intelligence},
  year      = {2018}
}

@book{cramton2006combinatorial,
  editor    = {Cramton, Peter and Shoham, Yoav and Steinberg, Richard},
  title     = {Combinatorial Auctions},
  publisher = {MIT Press},
  year      = {2006}
}

@inproceedings{curry2022differentiable,
  author    = {Curry, Michael and Sandholm, Tuomas and Dickerson, John},
  title     = {Differentiable Economics for Randomized Affine Maximizer Auctions},
  booktitle = {Proceedings of the Thirty-Second International Joint Conference on Artificial Intelligence},
  pages     = {2633--2641},
  year      = {2023}
}

@inproceedings{duetting2019optimal,
  author    = {D\"{u}tting, Paul and Feng, Zhe and Narasimhan, Harikrishna and Parkes, David C. and Ravindranath, Sai Srivatsa},
  title     = {Optimal Auctions through Deep Learning},
  booktitle = {Proceedings of the 36th International Conference on Machine Learning},
  volume    = {97},
  pages     = {1706--1715},
  year      = {2019}
}

@inproceedings{feng2018deep,
  author    = {Feng, Zhe and Narasimhan, Harikrishna and Parkes, David C.},
  title     = {Deep Learning for Revenue-Optimal Auctions with Budgets},
  booktitle = {Proceedings of the 17th International Conference on Autonomous Agents and Multiagent Systems},
  year      = {2018}
}

@inproceedings{gilmer2017mpnn,
  author    = {Gilmer, Justin and Schoenholz, Samuel S. and Riley, Patrick F. and Vinyals, Oriol and Dahl, George E.},
  title     = {Neural Message Passing for Quantum Chemistry},
  booktitle = {Proceedings of the 34th International Conference on Machine Learning},
  volume    = {70},
  pages     = {1263--1272},
  year      = {2017}
}

@article{gul2000english,
  author  = {Gul, Faruk and Stacchetti, Ennio},
  title   = {The {E}nglish Auction with Differentiated Commodities},
  journal = {Journal of Economic Theory},
  volume  = {92},
  number  = {1},
  pages   = {66--95},
  year    = {2000}
}

@article{kwasnica2005new,
  author  = {Kwasnica, Anthony M. and Ledyard, John O. and Porter, Dave and DeMartini, Christine},
  title   = {A New and Improved Design for Multiobject Iterative Auctions},
  journal = {Management Science},
  volume  = {51},
  number  = {3},
  pages   = {419--434},
  year    = {2005}
}

@inproceedings{leyton2000towards,
  author    = {Leyton-Brown, Kevin and Pearson, Mark and Shoham, Yoav},
  title     = {Towards a Universal Test Suite for Combinatorial Auction Algorithms},
  booktitle = {Proceedings of the 2nd ACM Conference on Electronic Commerce},
  pages     = {66--76},
  year      = {2000}
}

@article{mishra2007ascending,
  author  = {Mishra, Debasis and Parkes, David C.},
  title   = {Ascending Price {V}ickrey Auctions for General Valuations},
  journal = {Journal of Economic Theory},
  volume  = {132},
  number  = {1},
  pages   = {335--366},
  year    = {2007}
}

@inproceedings{parkes1999ibundle,
  author    = {Parkes, David C.},
  title     = {i{B}undle: An Efficient Ascending Price Bundle Auction},
  booktitle = {Proceedings of the 1st ACM Conference on Electronic Commerce},
  pages     = {148--157},
  year      = {1999}
}

@incollection{parkes2006iterative,
  author    = {Parkes, David C.},
  title     = {Iterative Combinatorial Auctions},
  booktitle = {Combinatorial Auctions},
  editor    = {Cramton, Peter and Shoham, Yoav and Steinberg, Richard},
  publisher = {MIT Press},
  year      = {2006}
}

@inproceedings{pinto2018asymmetric,
  author    = {Pinto, Lerrel and Andrychowicz, Marcin and Welinder, Peter and Zaremba, Wojciech and Abbeel, Pieter},
  title     = {Asymmetric Actor Critic for Image-Based Robot Learning},
  booktitle = {Proceedings of Robotics: Science and Systems},
  year      = {2018}
}

@inproceedings{rahme2021auction,
  author    = {Rahme, Jad and Jelassi, Samy and Bruna, Joan and Weinberg, S. Matthew},
  title     = {A Permutation-Equivariant Neural Network Architecture for Auction Design},
  booktitle = {Proceedings of the 35th AAAI Conference on Artificial Intelligence},
  year      = {2021}
}

@inproceedings{schulman2016high,
  author    = {Schulman, John and Moritz, Philipp and Levine, Sergey and Jordan, Michael and Abbeel, Pieter},
  title     = {High-Dimensional Continuous Control Using Generalized Advantage Estimation},
  booktitle = {Proceedings of the 4th International Conference on Learning Representations},
  year      = {2016}
}

@article{schulman2017proximal,
  author  = {Schulman, John and Wolski, Filip and Dhariwal, Prafulla and Radford, Alec and Klimov, Oleg},
  title   = {Proximal Policy Optimization Algorithms},
  journal = {arXiv preprint arXiv:1707.06347},
  year    = {2017}
}

@article{raffin2021stable,
  author  = {Raffin, Antonin and Hill, Ashley and Gleave, Adam and Kanervisto, Anssi and Ernestus, Maximilian and Dormann, Noah},
  title   = {{Stable-Baselines3}: Reliable Reinforcement Learning Implementations},
  journal = {Journal of Machine Learning Research},
  year    = {2021},
  volume  = {22},
  number  = {268},
  pages   = {1--8},
  url     = {http://jmlr.org/papers/v22/20-1364.html}
}

@article{ausubel2017practical,
  author    = {Ausubel, Lawrence M. and Baranov, Oleg V.},
  title     = {A Practical Guide to the Combinatorial Clock Auction},
  journal   = {The Economic Journal},
  volume    = {127},
  number    = {605},
  pages     = {F334--F350},
  year      = {2017}
}

@inproceedings{brero2019fast,
  author    = {Brero, Gianluca and Lahaie, S\'{e}bastien and Seuken, Sven},
  title     = {Fast Iterative Combinatorial Auctions via {B}ayesian Learning},
  booktitle = {Proceedings of the AAAI Conference on Artificial Intelligence},
  volume    = {33},
  pages     = {1820--1828},
  year      = {2019}
}

@inproceedings{eden2021reinforcement,
  author    = {Brero, Gianluca and Eden, Alon and Gerstgrasser, Matthias and Parkes, David C. and Rheingans-Yoo, Duncan},
  title     = {Reinforcement Learning of Sequential Price Mechanisms},
  booktitle = {Proceedings of the AAAI Conference on Artificial Intelligence},
  volume    = {35},
  pages     = {5219--5227},
  year      = {2021}
}

@article{gerstgrasser2026stackelberg,
  author    = {Gerstgrasser, Matthias and Brero, Gianluca and Eden, Alon and Chakrabarti, Darshan and Lepore, Nicolas and Li, Vincent and Mibuari, Eric and Greenwald, Amy and Parkes, David C.},
  title     = {Stackelberg {POMDP}: Learning to Lead via Reinforcement Learning},
  journal   = {arXiv preprint arXiv:2210.03852},
  year      = {2026},
  doi       = {10.48550/arXiv.2210.03852},
  url       = {https://arxiv.org/abs/2210.03852}
}

@misc{canada2014_700mhz,
  author       = {{Industry Canada}},
  title        = {700~{MHz} Spectrum Auction---Process and Results},
  howpublished = {\url{https://www.canada.ca/en/news/archive/2014/02/700-mhz-spectrum-auction-process-results.html}},
  year         = {2014},
  note         = {Accessed: 2026-09-23}
}

@article{ravindranath2024deep,
  author    = {Ravindranath, Sai Srivatsa and Feng, Zhe and Wang, Di and Zaheer, Manzil and Mehta, Aranyak and Parkes, David C.},
  title     = {Deep Reinforcement Learning for Sequential Combinatorial Auctions},
  journal   = {arXiv preprint arXiv:2407.08022},
  year      = {2024}
}

@inproceedings{deon2024understanding,
  author    = {d'Eon, Greg and Newman, Neil and Leyton-Brown, Kevin},
  title     = {Understanding Iterative Combinatorial Auction Designs via Multi-Agent Reinforcement Learning},
  booktitle = {Proceedings of the 25th ACM Conference on Economics and Computation},
  year      = {2024}
}

@inproceedings{soumalias2024mlcca,
  author    = {Soumalias, Ermis and Weissteiner, Jakob and Heiss, Jakob and Seuken, Sven},
  title     = {Machine Learning-powered Combinatorial Clock Auction},
  booktitle = {Proceedings of the AAAI Conference on Artificial Intelligence},
  volume    = {38},
  pages     = {9891--9900},
  year      = {2024}
}

@inproceedings{soumalias2025mlhca,
  author    = {Soumalias, Ermis and Heiss, Jakob and Weissteiner, Jakob and Seuken, Sven},
  title     = {Prices, Bids, Values: One {ML}-Powered Combinatorial Auction to Rule Them All},
  booktitle = {Proceedings of the 42nd International Conference on Machine Learning},
  volume    = {267},
  pages     = {56570--56614},
  year      = {2025}
}

\clearpage
\appendix
\raggedbottom

\section{Theoretical Details}
\label{app:theory}

\subsection{Clearing Gap and Efficient Allocations}
\label{app:clearing}

For any feasible allocation \(x\), write \(W(x)=\sum_i v_i x_i\).
Then
\begin{align}
L(p;\mathcal S,v)-W(x)
&=\sum_i\left[\max\{0,v_i-p(S_i)\}-x_i\bigl(v_i-p(S_i)\bigr)\right]\notag\\
&\quad+\sum_j p_j\left(1-\sum_{i:j\in S_i}x_i\right)\geq0.
\label{eq:gap-decomposition}
\end{align}
Each bidder term is nonnegative because buying or abstaining cannot yield more than maximum utility; each item term is nonnegative by feasibility and nonnegative prices.
At an efficient allocation, equality holds precisely when every bidder receives a utility-maximizing choice and every unsold item has price zero.
This gives the welfare interpretation of \(G=0\)~\citep{bikhchandani2002package,brero2018bayesian}.
At an indifference price, the allocation can choose among tied utility maximizers even if the demand-report convention selects one of them.
The numerical tolerance in Equation~\eqref{eq:clearing} approximates this equality and does not assert that linear clearing prices exist for every instance.

\subsection{Proof of the Greedy-Clearing Separation}
\label{app:greedy-separation}

\begin{proof}[Proof of Proposition~\ref{thm:greedy-separation}]
Both bidders desire the single item, whose scalar price is \(p\).
Index the possible value profiles by \(k\in\{1,\ldots,K\}\) and set
\begin{equation}
v^{(k)}=\left(\frac{k}{K},\frac{2k-1}{2K}\right),
\qquad
\rho_k=\Pr(v=v^{(k)})=\frac{2k}{K(K+1)}.
\label{eq:greedy-construction}
\end{equation}
These probabilities are positive, strictly increasing, and sum to one.
Prices and values lie in \([0,1]\).
Write \(G_k(p):=G(p;\mathcal S,v^{(k)})\) for the gap at profile \(k\).
For a closed interval \(I\), define \(\operatorname{dist}(p,I):=\min_{z\in I}|p-z|\), the distance to its nearest point.
Substitution in Equation~\eqref{eq:clearing-objective} gives
\[
G_k(p)=\operatorname{dist}(p,I_k),
\qquad
I_k=\left[\frac{2k-1}{2K},\frac{k}{K}\right].
\]
The distance is zero precisely inside the interval.
The intervals are disjoint and ordered, so every price clears at most one profile.

Suppose the remaining profiles are \(1,\ldots,r\), with posterior probabilities proportional to their indices.
Every greedy price belongs to \(I_r\), because profile \(r\) has uniquely greatest posterior probability.
On failure, the true index is below \(r\), and both bidder values are at most \((r-1)/K\), strictly below every price in \(I_r\).
Every remaining profile therefore produces the same response: no demand and failure to clear.
Conditioning removes only profile \(r\), leaving a posterior proportional to the indices \(1,\ldots,r-1\).
Induction implies that every greedy policy tests profiles in the order \(K,K-1,\ldots,1\), regardless of how it chooses or randomizes prices within the selected interval.
Thus profile \(k\) clears in round \(K-k+1\), and
\begin{equation}
J_F(\pi_{\mathrm{greedy}})
=\sum_{k=1}^K\frac{2k(K-k+1)}{K(K+1)}
=\frac{K+2}{3}.
\label{eq:greedy-query-cost}
\end{equation}

For the informative policy, suppose the remaining indices form a consecutive block \(\ell,\ldots,u\).
Choose its middle index \(j=\lfloor(\ell+u)/2\rfloor\) and post the interior price
\(p=(4j-1)/(4K)\in I_j\).
Profile \(j\) clears; lower indices produce zero demands and higher indices produce two demands.
Consequently each unsuccessful query identifies a consecutive block containing at most half as many profiles.
The maximum number of remaining rounds satisfies \(T(0)=0\) and \(T(M)=1+T(\lfloor M/2\rfloor)\) for a block of \(M\geq1\) profiles.
Induction gives \(T(M)=\lceil\log_2(M+1)\rceil\), including the clearing round.
All informative prices are strictly inside clearing intervals, and greedy failures involve strict inequalities, so the argument is unaffected by either demand tie convention when exact-clearing feedback is supplied.

Since the optimal expected round count is at most this policy's worst-case count, the ratio of greedy to optimal expected round counts is at least
\[
\frac{K+2}{3\lceil\log_2(K+1)\rceil},
\]
which diverges as \(K\to\infty\).
\end{proof}

\subsection{Analysis of the One-Item Experiment}
\label{app:toy}

The conceptual comparator maximizes the Bayesian one-step price objective of \citet{brero2019fast},
\begin{equation}
\Psi_\lambda(p\mid\mathcal H_t)
=\mathbb E\!\left[\exp\{-\lambda G(p;v)\}\mid\mathcal H_t\right],
\label{eq:toy-bayes-score}
\end{equation}
Here \(\lambda>0\) controls the gap penalty, \(G(p;v)\) suppresses the fixed bundle profile, and the expectation uses the exact posterior given \(\mathcal H_t\), including demand and nonclearing feedback.
As \(\lambda\to\infty\), the score approaches exact-clearing probability; at finite precision it can differ from the greedy criterion in Section~\ref{sec:informative-theory}.

The following theorem establishes optimal expected rounds for the profiles in Section~\ref{sec:informative-query} and a family containing them.

\begin{theorem}[A prior region favoring an informative query]
\label{thm:prior-region}
Let the three value profiles be \(v^{(k)}=(a+ks+w,a+ks)\), \(k=0,1,2\), where \(a\geq0\) is the lowest value, \(s>0\) the regime spacing, and \(0<w<s\) the within-profile value difference.
Let their probabilities satisfy \(\rho_k>0\) and \(\rho_0+\rho_1+\rho_2=1\).
Use strict demand \(v_i>p\), a clearing indicator for \(G=0\), prices in \([0,P]\) with cap \(P\geq a+2s+w\), and a round cap of at least three.
Fix \(\lambda>0\). If
\begin{equation}
\rho_0\leq\rho_1,\qquad
\rho_1+e^{-\lambda s}\rho_0<\rho_2<\tfrac12,
\label{eq:prior-region}
\end{equation}
then a first price in \((a+s,a+s+w)\), followed by a clearing price for the identified regime, minimizes expected rounds among policies that clear all three regimes.
Its cost is \(2-\rho_1\); exact Bayesian updating under Equation~\eqref{eq:toy-bayes-score} costs \(2-\rho_2+\rho_0\), exceeding it by \(1-2\rho_2>0\).
\end{theorem}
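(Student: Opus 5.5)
The plan has four steps: compute the clearing gap explicitly, lower-bound every policy that clears all three regimes by the price it opens with, show the middle opening meets that bound, and then trace exact Bayesian updating under Equation~\eqref{eq:toy-bayes-score}.

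\emph{Gap and responses.} Substituting into Equation~\eqref{eq:clearing-objective} with \(W^\star=v_1\), profile \(k\) has gap \(G_k(p)=\operatorname{dist}(p,I_k)\). Here \(I_k=[a+ks,\,a+ks+w]\), exactly as in the proof of Proposition~\ref{thm:greedy-separation}. Since \(0<w<s\), the three intervals are disjoint and ordered, so every price clears at most one regime. Under strict demand \(v_i>p\):
\begin{itemize}
\item any \(p\in I_2\) gives zero demand for regimes \(0\) and \(1\), because their values are at most \(a+s+w<a+2s\);
\item any \(p\in I_0\) gives two demands for regimes \(1\) and \(2\), because their values are at least \(a+s>a+w\);
\item any \(p\in(a+s,a+s+w)\) gives zero demand for regime \(0\) and two demands for regime \(2\).
\end{itemize}

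\emph{Lower bound for any policy clearing all three regimes.} I would split on the first price.
\begin{itemize}
\item If the first price clears no regime, every regime costs at least \(2\), which exceeds \(2-\rho_1\).
\item If the first price clears regime \(j\), the cost is at least \(\rho_j+2(1-\rho_j)=2-\rho_j\). Equality requires the failure response to separate the other two regimes.
\item For \(j=2\), the response never separates regimes \(0\) and \(1\), so one of them needs a third round. The cost is at least \(\rho_2+2\rho_1+3\rho_0\), using \(\rho_0\le\rho_1\), and this equals \(2-\rho_2+\rho_0\) because the probabilities sum to one.
\item Symmetrically, \(j=0\) costs at least \(\rho_0+2\max(\rho_1,\rho_2)+3\min(\rho_1,\rho_2)\ge 2-\rho_0\ge 2-\rho_1\).
\item For \(j=1\), the interior price separates regimes \(0\) and \(2\), so the bound \(2-\rho_1\) is attained by following up with a clearing price for the identified regime.
\end{itemize}
Comparing the cases, \(2-\rho_1<2-\rho_2+\rho_0\) holds iff \(\rho_2<\rho_0+\rho_1=1-\rho_2\), that is, iff \(\rho_2<\tfrac12\). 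This gives optimality and the gap \(1-2\rho_2\).

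\emph{Bayesian trajectory.} The score is \(\Psi_\lambda(p)=\sum_k\rho_k e^{-\lambda\operatorname{dist}(p,I_k)}\). Between consecutive intervals it is a sum of exponentials that are convex in \(p\), so its maximum lies in some \(I_k\); the same holds beyond the outer intervals, where the score is monotone. It therefore suffices to compare suprema on \(I_0,I_1,I_2\). I would show the best point of \(I_2\) (at \(p=a+2s\)) beats every point of \(I_1\) and \(I_0\) using \(\rho_1+e^{-\lambda s}\rho_0<\rho_2\), with \(\rho_0\le\rho_1\) handling \(I_0\).

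This is the step I expect to be the main obstacle. On \(I_1\) the score also carries a \(\rho_2 e^{-\lambda d}\) term and the distance to \(I_0\) is only \(\ge s-w\), so the stated hypothesis does not bound it directly. I would first try to pair terms, noting that moving from a point of \(I_1\) to \(a+2s\) trades \(\rho_2e^{-\lambda d_2}\to\rho_2\) against \(\rho_1\to\rho_1e^{-\lambda(s-w)}\). If that fails, I would check whether the hypothesis needs reinterpretation, for example distances measured between interval endpoints.

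\emph{After the first failure.} The response has zero demand, so the posterior is proportional to \((\rho_0,\rho_1)\). With two intervals, the maximizer lies in \(I_1\) when \(\rho_1\ge\rho_0\); a tie does not change the cost. On a second failure the policy posts a price in \(I_0\) and clears in round three. The total is \(\rho_2+2\rho_1+3\rho_0=2-\rho_2+\rho_0\), and its excess over \(2-\rho_1\) is \(\rho_0+\rho_1-\rho_2=1-2\rho_2>0\).
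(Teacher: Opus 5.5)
\paragraph{Verdict.} Your optimality half is correct and essentially the paper's argument. You split on which regime the first price clears, and you note that an $I_2$ price fails to separate regimes $0,1$ and an $I_0$ price fails to separate regimes $1,2$. For randomized policies you should add that the bound holds for each first price and then averages, but that is minor.

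\paragraph{The gap.} You never prove that exact Bayesian updating opens in $I_2$. That step is where the hypothesis $\rho_1+e^{-\lambda s}\rho_0<\rho_2$ is used, and without it the claimed Bayesian cost $2-\rho_2+\rho_0$ is unproved. Your worry is legitimate for a termwise bound. Bounding each term on $I_1$ by its own largest value gives $\rho_1+(\rho_0+\rho_2)e^{-\lambda(s-w)}$. The hypothesis does not dominate this, and it already fails for the experimental weights $(.25,.35,.40)$ with $\lambda=1$. Still, no reinterpretation of the hypothesis is needed.

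\paragraph{The missing observation.} $\Psi_\lambda$ is strictly convex not only between intervals but also inside each $I_k$. There it equals $\rho_k$ plus exponentials of affine functions of $p$; on $I_1$, for example, it is $\rho_1+\rho_2e^{-\lambda(a+2s-p)}+\rho_0e^{-\lambda(p-a-w)}$. As $p$ moves across $I_1$, the $\rho_2$ term rises while the $\rho_0$ term falls, and convexity means only the endpoints matter. The maximizer is therefore one of the six endpoints $a_k=a+ks$ and $b_k=a_k+w$.

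\paragraph{Endpoint values.} Write $r=e^{-\lambda s}<u=e^{-\lambda(s-w)}$. Then $\Psi_\lambda(a_2)=\rho_2+\rho_1u+\rho_0ru$, $\Psi_\lambda(b_1)=\rho_1+\rho_2u+\rho_0r$, $\Psi_\lambda(a_1)=\rho_1+\rho_2r+\rho_0u$, and $\Psi_\lambda(b_0)=\rho_0+\rho_1u+\rho_2ru$.

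\paragraph{Comparisons.} Your pairing idea works exactly at the endpoint $b_1$. Moving from $b_1$ to $a_2$, the $\rho_2$ term rises from $u$ to $1$, while the $\rho_1$ and $\rho_0$ terms fall from $1$ and $r$ to $u$ and $ru$. Hence $\Psi_\lambda(a_2)-\Psi_\lambda(b_1)=(1-u)(\rho_2-\rho_1-r\rho_0)$, and the hypothesis is precisely its sign condition. The remaining comparisons follow from $\rho_2>\rho_1\ge\rho_0$ and $u>r$:
\begin{itemize}
\item $\Psi_\lambda(b_1)-\Psi_\lambda(a_1)=(\rho_2-\rho_0)(u-r)>0$;
\item $\Psi_\lambda(a_2)-\Psi_\lambda(b_0)=(\rho_2-\rho_0)(1-ru)>0$;
\item $a_0$ and $b_2$ are dominated by $b_0$ and $a_2$, respectively.
\end{itemize}
So $a_2$ is the unique first Bayesian price. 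Your post-failure analysis then completes the cost computation.
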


\begin{proof}
Let \(a_k=a+ks\), \(b_k=a_k+w\), and \(I_k=[a_k,b_k]\).
For regime \(k\), write \(G_k(p)=G(p;\mathcal S,v^{(k)})=\operatorname{dist}(p,I_k)\), using the interval distance defined in Appendix~\ref{app:greedy-separation}.
Abbreviate the initial score as \(\Psi_\lambda(p):=\Psi_\lambda(p\mid\mathcal H_0)\).
It is strictly convex between consecutive interval endpoints and monotone outside their range, so a maximizer is an endpoint.
Write \(r=e^{-\lambda s}\) and \(u=e^{-\lambda(s-w)}\); then \(0<r<u<1\).
The inward endpoints satisfy \(\Psi_\lambda(b_0)>\Psi_\lambda(a_0)\) and \(\Psi_\lambda(a_2)>\Psi_\lambda(b_2)\).
Since \(\rho_2>\rho_1\geq\rho_0\),
\[
\Psi_\lambda(b_1)-\Psi_\lambda(a_1)=(\rho_2-\rho_0)(u-r)>0,\qquad
\Psi_\lambda(a_2)-\Psi_\lambda(b_0)=(\rho_2-\rho_0)(1-ru)>0.
\]
The remaining comparison is
\[
\Psi_\lambda(a_2)-\Psi_\lambda(b_1)
=(1-u)(\rho_2-\rho_1-r\rho_0)>0.
\]
Thus Bayesian pricing first chooses \(a_2\).
On failure, both low and middle report no demand, leaving posterior weights proportional to \(\rho_0,\rho_1\).
The middle interval is then optimal because \(\rho_1\geq\rho_0\); either choice has the same expected cost when they are equal.
The resulting cost is \(\rho_2+2\rho_1+3\rho_0=2-\rho_2+\rho_0\).

A first price in \((a_1,b_1)\) clears the middle regime and otherwise identifies low or high from zero or two demands, giving cost \(2-\rho_1\).
Any first price can clear at most one regime.
If it clears high, low and middle remain indistinguishable, so the best possible cost is \(2-\rho_2+\rho_0\); if it clears low, the best cost is \(2-\rho_0+\rho_1\).
If it clears neither extreme nor the middle, the cost is at least two.
These costs exceed \(2-\rho_1\) by \(1-2\rho_2>0\), \(2\rho_1-\rho_0>0\), and at least \(\rho_1>0\), respectively.
A first price clearing the middle cannot improve on \(2-\rho_1\), because every uncleared case requires at least one more query.
Averaging over first prices extends the bound to randomized policies.
\end{proof}

\paragraph{Precision in the conceptual experiment.}
For the experimental profiles, \(a=w=1\), \(s=2\), and \((\rho_0,\rho_1,\rho_2)=(.25,.35,.40)\).
Condition~\eqref{eq:prior-region} reduces to \(.35+.25e^{-2\lambda}<.40\), equivalently \(\lambda>\tfrac12\log5\approx.805\).
Throughout this range, including the tested \(\lambda=1\), Bayesian pricing uses \(.40+2(.35)+3(.25)=1.85\) expected rounds, while the informative optimum is 1.65.
At the smaller precision \(\lambda=.1\), the Bayesian rule instead first posts \(p=4\), distinguishes the remaining profiles on failure, and also attains 1.65 expected rounds.

\section{Implementation and Training}
\label{app:policy-details}
\label{app:policy}

\subsection{Public Features and Architecture}

Let \(\ell_{i,t}\) and \(u_{i,t}\) be the public lower and upper bounds on \(v_i\) after \(t\) rounds.
After observing \(y_{i,t+1}\), these bounds update as
\begin{align}
\ell_{i,t+1}&=\begin{cases}
\max\{\ell_{i,t},p_{t+1}(S_i)\},&y_{i,t+1}=1,\\
\ell_{i,t},&y_{i,t+1}=0,
\end{cases}\label{eq:training-lower-bound}\\
u_{i,t+1}&=\begin{cases}
u_{i,t},&y_{i,t+1}=1,\\
\min\{u_{i,t},p_{t+1}(S_i)\},&y_{i,t+1}=0.
\end{cases}\label{eq:training-upper-bound}
\end{align}
Initially \(\ell_{i,0}=0\), \(u_{i,0}=C_d\), and, for every item \(j\), \(p_{0,j}=D_{0,j}=0\) and \(h_{0,j}=0\); these placeholders do not represent an executed query.
For the strict-demand conceptual experiment, these updates retain closed outer bounds at the endpoints.
Bounds, prices, and demand are divided by \(C_d\), \(p_d^{\max}\), and \(n\), respectively, then clipped to \([0,1]\).
All public features are determined by the history in Equation~\eqref{eq:public-history}.
The GRU is shared across items and has 32 hidden units.

Let \(s_i=|S_i|/m\) be normalized bundle size and \(c_j=|\{i:j\in S_i\}|\) the number of bundles containing item \(j\).
Write \(b_{i,t}^{(r)}\) and \(e_{j,t}^{(r)}\) for the 128-dimensional bidder and item tokens after \(r\) graph passes.
The initial tokens, with \(r=0\), are
\begin{equation}
\begin{aligned}
b_{i,t}^{(0)}&=\operatorname{ReLU}\!\left(W_{\mathrm{bid}}[\bar\ell_{i,t};\bar u_{i,t};s_i]+\beta_{\mathrm{bid}}\right),\\
e_{j,t}^{(0)}&=\operatorname{ReLU}\!\left(W_{\mathrm{item}}[h_{t,j};t/H;c_j/n]+\beta_{\mathrm{item}}\right).
\end{aligned}
\label{eq:token-initialization}
\end{equation}
Here \([\,\cdot\,;\,\cdot\,]\) denotes vector concatenation, the \(W\)'s and \(\beta\)'s are learned weight matrices and bias vectors, and \(\operatorname{ReLU}(z)=\max\{0,z\}\) acts coordinatewise.
Current prices and aggregate demand enter the item encoder through the recurrent state.
The graph's two passes each concatenate a node's 128-dimensional token with its 128-dimensional neighbor mean and apply a learned map back to 128 dimensions.
Items update first and bidders then use the new item tokens.
The passes have separate weights; within each pass, item maps are shared across items and bidder maps across bidders.
Empty neighborhoods contribute a zero mean while retaining the node's own token.

The actor applies a shared \(128\to128\to64\to1\) decoder with ReLU hidden layers to each final item token, yielding \(\mu_{t,j}\).
It has no separate pooled-context, item-coordinate, or domain-label input.
During training, the unbounded normalized action \(\tilde a_{t,j}\) and its clipped version \(a_{t,j}\) satisfy
\begin{equation}
\tilde a_{t,j}\sim\mathcal N(\mu_{t,j},\sigma_j^2),\qquad
\sigma_j=\exp(\operatorname{clip}_{[-20,5]}(\omega_j)),\qquad
a_{t,j}=\operatorname{clip}_{[-1,1]}(\tilde a_{t,j}),
\label{eq:fastcombo-policy}
\end{equation}
Here \(\mathcal N(\mu,\sigma^2)\) is a Normal distribution with mean \(\mu\) and variance \(\sigma^2\); \(\omega_j\) is a learned state-independent log-scale and \(\sigma_j\) the resulting standard deviation.
The clipped action has a Normal density in \((-1,1)\) and point masses at \(-1\) and \(1\) equal to the corresponding tail probabilities.
PPO scores this censored law, and the posted price is
\(p_{t+1,j}=p_d^{\max}(a_{t,j}+1)/2\).
Evaluation uses the clipped location as in Equation~\eqref{eq:deterministic-price-main}, which generally differs from the censored law's expectation.

For the critic, a learned embedding of \(\bar v_i=v_i/C_d\) is added to each bidder's initial token before a separate call to the shared graph encoder.
The critic concatenates the mean bidder token with all item tokens into \(z_t^{\mathrm{crit}}\in\mathbb R^{128(m+1)}\).
Hidden layers of widths 128 and 64 with hyperbolic-tangent (Tanh) activations predict the expected discounted sum of future training rewards.
Write this estimate as \(V_\phi(\mathcal H_t,v)\), where \(\phi\) includes shared and critic-specific parameters.
Private values never enter the recurrent state or the public actor call.
Figure~\ref{fig:fastcombo-architecture} summarizes these paths.

\begingroup
\renewcommand{\small}{\fontsize{9}{10}\selectfont}
\begin{figure}[t]
\centering
\begin{tikzpicture}[
  x=1cm,
  y=1cm,
  font=\small,
  flow/.style={-{Stealth[length=2.0mm]},semithick,draw=black},
  stem/.style={semithick,draw=black},
  box/.style={draw=black,fill=white,rounded corners=2pt,
    align=center,text width=5.90cm,inner sep=4pt},
  stage/.style={font=\small\bfseries,text=black,align=center},
  bidder/.style={circle,draw=black,fill=white,
    minimum size=4.2mm,inner sep=0pt},
  item/.style={rectangle,draw=black,fill=white,
    minimum size=4.2mm,inner sep=0pt}
]

% Public bidder and recurrent item lanes.
\node[stage] at (3.00,10.80) {1. Inputs and node vectors};
\node[box,minimum height=1.45cm] (btoken) at (3.00,9.25)
  {Bidder \(i\): normalized bounds and bundle size\\
   \((\bar\ell_{i,t},\bar u_{i,t},s_i)\)\\
   Bidder encoder reused for every bidder\\
   \(3\to128:\ b_{i,t}^{(0)}\in\mathbb{R}^{128}\)};
\node[box,minimum height=2.05cm] (itoken) at (3.00,6.85)
  {Item \(j\): current feedback \((\bar p_{t,j},\bar D_{t,j})\)\\
   and previous state \(h_{t-1,j}\)\\
   Shared per-item GRU: \(2\to32:\ h_{t,j}\)\\[2pt]
   Item encoder reused for every item\\
   \([h_{t,j};t/H;c_j/n]\to e_{j,t}^{(0)}\in\mathbb{R}^{128}\)};
\node[box,densely dashed,text width=5.90cm,minimum height=0.65cm]
  (private) at (3.00,5.20) {Private values \(v_i\): critic only};

% Shared graph encoder.
\node[box,minimum height=4.80cm] (gnn) at (9.95,7.70) {};
\node[stage] at (9.95,9.75) {2. Bundle-graph updates};
\node[bidder] (b1) at (8.60,9.05) {};
\node[bidder] (b2) at (8.60,8.60) {};
\node[bidder] (b3) at (8.60,8.15) {};
\node[item] (j1) at (11.10,9.05) {};
\node[item] (j2) at (11.10,8.60) {};
\node[item] (j3) at (11.10,8.15) {};
\draw[thin,black] (b1) -- (j1);
\draw[thin,black] (b1) -- (j2);
\draw[thin,black] (b2) -- (j1);
\draw[thin,black] (b2) -- (j3);
\draw[thin,black] (b3) -- (j2);
\draw[thin,black] (b3) -- (j3);
\node at (8.60,7.72) {bidders};
\node at (11.10,7.72) {items};
\node[align=center,text width=5.50cm] at (9.95,6.82)
  {\textbf{Two untied graph passes}\\
   bidder means \(\longrightarrow\) items\\
   item means \(\longrightarrow\) bidders\\
   shared public/critic weights};
\node[align=center,text width=5.50cm] at (9.95,5.75)
  {Public bundle graph: edge \((i,j)\)\\exactly when \(j\in S_i\)};
\draw[flow] (btoken.east) -- ([yshift=1.35cm]gnn.west);
\draw[flow] (itoken.east) -- ([yshift=-0.85cm]gnn.west);
\draw[flow,densely dashed] (private.east) -- (6.45,5.20)
  |- ([yshift=-1.85cm]gnn.west);

% Readout and output heads continue clockwise.
\node[box,minimum height=3.55cm] (graphreadout) at (9.95,2.65)
  {\textbf{3. Readout}\\[7pt]
   Final graph vectors\\[5pt]
   Public item vectors\\
   \(e_{j,t}^{(2)}\in\mathbb{R}^{128}\): one vector per item\\[7pt]
   Privileged critic input\\
   \(z_t^{\mathrm{crit}}\in\mathbb{R}^{128(m+1)}\)};
\draw[flow] (gnn.south) -- (graphreadout.north);

\node[stage] at (3.00,4.60) {4. Outputs};
\node[box,minimum height=2.80cm] (ahead) at (3.00,2.85)
  {\textbf{Actor reused for every item}\\[2pt]
   \(e_{j,t}^{(2)}\to128\to64\to1\)\\
   output \(\mu_{t,j}\): no pooled context\\
   censored Normal (training)\\
   clipped location (evaluation)\\
   fixed map \(\to p_{t+1}\in[0,p_d^{\max}]^m\)};
\node[box,densely dashed,minimum height=1.70cm] (vhead) at (3.00,0.30)
  {\textbf{Critic (training only)}\\[2pt]
   \(z_t^{\mathrm{crit}}\to128\to64\to1\)\\
   Tanh hidden layers; scalar value estimate\\
   \(V_{\phi}(\mathcal H_t,v)\)};
\coordinate (fork) at (6.55,2.65);
\draw[stem] (graphreadout.west) -- (fork);
\fill[black] (fork) circle[radius=1.35pt];
\draw[flow] (fork) |- (ahead.east);
\draw[flow,densely dashed] (fork) |- (vhead.east);

\end{tikzpicture}
\caption{FastCombo architecture, following the numbered stages clockwise. Solid arrows show the public actor: per-item recurrence, two graph passes, shared itemwise decoding, and the fixed price map. Dashed arrows show the training-only privileged critic, which embeds normalized private values before a separate call to the same graph encoder; \(z_t^{\mathrm{crit}}\) denotes its concatenated graph representation. During training, actor locations and learned state-independent log-scales parameterize a product censored Normal law on \([-1,1]^m\); evaluation uses the clipped location.}
\label{fig:fastcombo-architecture}
\end{figure}

\endgroup

\subsection{Optimization and Query Caps}
\label{app:training-pomdp}

A clearing query is a true termination, including when it occurs at the cap \(H\).
An uncleared CATS cap is a truncation: PPO adds \(\gamma V_\phi\) at the final pre-reset observation to the last reward before advantage estimation, using the updated recurrent state and training-only private values.
This continuation estimate is part of the training surrogate, while evaluation still stops at the cap.

For CATS, each PPO update uses 2,048 transitions: 256 from each of eight environments.
Complete within-rollout episode segments are packed into minibatches of at most 256 transitions and replayed in temporal order.
Episode boundaries reset memory; rollout boundaries preserve its forward value while stopping gradients through earlier rollouts.
CATS uses ten optimization epochs per rollout, Adam learning rate \(2\times10^{-4}\), generalized-advantage parameter .95, PPO clip .2, value coefficient .5, zero entropy coefficient, and gradient-norm clipping at .5.

\subsection{Training Dynamics on CATS}
\label{app:training-curves}

Figure~\ref{fig:learning-exact-clearing} reports clearing at all 50 development evaluations for each of the ten training seeds.
Mean clearing rises rapidly and then remains near a plateau in all four domains, with temporary declines at some individual checkpoints.
The curves show both the variation across training seeds and the role of checkpoint selection under the protocol in Section~\ref{sec:cats}.

\begin{figure}[!t]
  \centering
  \includegraphics[width=\textwidth]{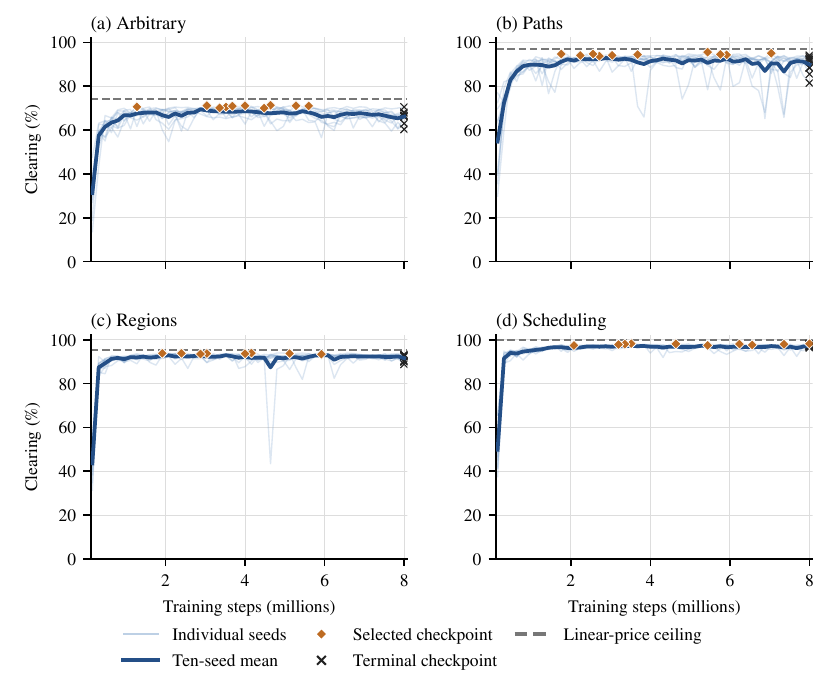}
  \caption{Development clearing over CATS training on the fixed 1,000-instance population. The plotted ``exact clearing'' metric uses the numerical criterion \(G<10^{-3}\). Thin lines show all ten training seeds and the thick line their mean. Diamonds mark the selected checkpoints, crosses mark terminal checkpoints, and the dashed line is the development-population linear-price ceiling. Selection uses development clearing only; final results use a separate holdout.}
  \label{fig:learning-exact-clearing}
\end{figure}

\subsection{Conceptual Experiment Protocol}
\label{app:toy-training}

The one-item experiment in Section~\ref{sec:informative-query} uses the same graph actor, recurrence, continuous-price head, and privileged critic as CATS.
The five fits use seeds \(0,42,69,101,202\), with eight parallel environments per fit.
A fresh profile is drawn at each reset.
The optimizer settings match CATS: ten epochs per rollout, Adam learning rate \(2\times10^{-4}\), GAE parameter .95, 2,048 transitions across eight environments, minibatches of at most 256 transitions, PPO clip .2, value coefficient .5, zero entropy coefficient, and gradient-norm clipping at .5.
Neither experiment uses a target-KL stopping threshold.
The conceptual experiment uses \(\gamma=1\), reward \(-1\) per query, a terminal three-query horizon, and 200,704 training transitions; evaluation uses every terminal actor without checkpoint selection.
CATS instead uses \(\gamma=.99\), the normalized-gap reward, a truncating 50-query cap, and development-selected checkpoints; its eight-million-step budget completes at 8,001,536 transitions because updates use complete rollouts.
Training uses no teacher actions, and evaluation applies no price repair.
The five conceptual-experiment configurations and their source records are supplied in the reproduction package; \path{docs/EXPERIMENTS.md} locates the original runs, and \path{reference/verification/toy_verified.json} records the configuration and outcome checks.

\section{Evaluation Protocols}
\label{app:evaluation}

\subsection{CATS Instances, Baselines, and Uncertainty}
\label{app:baselines}

Section~\ref{sec:cats} defines the domains, splits, training budget, and selection rule.
The ten prespecified training seeds are \(0,42,69,101,202,303,404,505,606,707\).
CATS 2.1 generates the source files; per-file parameters, generation seeds, hashes, and domain-specific valuation and price ceilings are indexed in the reproduction package's \path{manifests/data.json}.
Training, development, and holdout auctions draw from the same 100 source files per domain.
The fixed development and holdout populations contain 1,000 and 5,000 auctions per domain, respectively, with no identical unordered sets of bundle--value pairs between them.
Each holdout prior excludes that auction's ten source bid records.
Thus development--holdout separation is defined at the auction level while retaining a common source-file pool.
Training software inventories cover all 320 runs (40 main-benchmark, 240 ablation, and 40 dense-policy runs) and record Python 3.12.3, PyTorch 2.12.1, and NumPy 2.5.0.
Package documentation in \path{docs/DATA.md} and \path{docs/ENVIRONMENT.md} specifies the sampling procedures, software inventories, and environment used for reproduction.

Bayes\(^{\mathrm{GEN}}\) uses 128 MCEM samples and at most 30 iterations.
The implementation parameter \texttt{accuracy}=7 specifies the sample count \(\min(2^7,128)=128\); the evaluation wrapper passes this count directly.
Both CATS studies scale each source file's maximum positive value to 10 and apply \(\lambda=1\) to the unnormalized gap in those scaled units, without division by total valuation. The clearing tolerance remains equivalent to \(10^{-3}\) in original CATS units.
MCEM starts from a constant price vector \(2\bar\mu/m\), where \(\bar\mu\) is the mean bidder prior mean and \(m\) the item count. With \(e_t=\max_j|p_{t,j}-p_{t-1,j}|/(|p_{t,j}|+10^{-100})\), the native early stop, after at least three iterations, requires \(\max(e_{t-2:t})/(\min(e_{t-2:t})+10^{-100})<2\). The code also checks that the minimum is no larger than the current error.
Both CATS comparisons use the same prior-sampling and bidder-construction procedures.
Monte Carlo seeds were recorded; Gaussian-process optimizer restarts were unseeded in the 5,000-instance benchmark.
For both SG tuning oracles, the query cap applies to the selected auction trajectory; counterfactual runs used to select its coefficient are not charged to that trajectory.
Their restriction to the tested coefficient family means neither is an upper bound on other price policies.
\paragraph{Price updates and query accounting.}
Both baselines query a second price vector after a nonclearing quote with no overdemand: prices of undemanded items are set to zero, and the additional query is charged if budget remains. Bayes conditions its EP updates on both responses. Following an unsuccessful second stage, SG continues from its original first-stage prices and demand. FastCombo has no second stage.
For grid index \(k\in\{0,\ldots,99\}\), SG uses \(a_k=(k+1)\max_i v_i/100\). Writing \(d_j\) for the number of bidders demanding item \(j\), the update is
\[p\leftarrow\max\{0,p+a_k(d-\mathbf 1)/\sqrt{t}\},\qquad t=1+\text{queries already charged}.\]
Initially \(p=d=0\), so the first charged quote is zero; second-stage queries advance \(t\). SG-D maximizes clearing count, then minimizes mean rounds among successful runs, then chooses the smallest index. The selected fractions of maximum value are .08, .17, .05, and .07 for Arbitrary, Paths, Regions, and Scheduling. SG-I chooses a successful trajectory with the fewest queries, breaking ties by smallest index; if none succeeds, it records failure at 50 using the last grid candidate as fallback. Baseline prices are nonnegative but are not clipped to the actor's calibrated price cap.

The prior samples half the positive valid source bids, rounded down, without replacement after excluding the auction's ten source records. The GP uses a linear kernel with amplitude initialized to \(e^2\), bounds \([10^{-3},10^3]\), zero fixed offset, and additive white-noise variance initialized to \(e^{-2}\), bounds \([10^{-5},10^3]\). It adds numerical jitter \(10^{-8}\), does not normalize targets, and performs an initial optimization plus one randomized restart. EP uses \(\beta=10/\bar\mu\) if \(\bar\mu>0\), otherwise 10.

\paragraph{Linear-price ceiling and implementation.}
For the actor's linear-price ceiling, the LP minimizes \(\sum_jp_j+\sum_i u_i\), subject to \(p(S_i)+u_i\geq v_i\), \(u_i\geq0\), and \(0\leq p_j\leq p_d^{\max}\). An optimum less than \(10^{-3}\) above exact integer welfare counts as feasible; the corresponding uncapped LP measures feasibility without the actor's price constraint. Enumerating all 1,024 bidder subsets gives exact integer welfare. The reproduction audit independently recomputes welfare and both LP classifications.
The reproduction package documents the evaluation entrypoints in \path{docs/EXPERIMENTS.md}, including the second-stage wrappers, MCEM updates, GP fitting, and SG coefficient selection. Numerical checks are supplied in \path{reference/verification/baseline_verified.json}.

\paragraph{Statistical comparisons.}
Clearing-rate standard errors and paired intervals in Table~\ref{tab:clearing-uncertainty} use 20,000 bootstrap resamples of source files, drawn jointly across methods, and FastCombo training seeds.
Baseline fits and coefficient choices stay fixed, so their uncertainty is conditional on those choices and the recorded solver outcomes.
The paired FastCombo--Bayes interval uses the 2.5th and 97.5th percentiles of the same bootstrap differences.
Ablation intervals additionally pair the matched training seeds.
Query comparisons on common-cleared subsets condition on all included methods succeeding and therefore describe a selected subset of instances.
Table~\ref{tab:common-cleared-queries} reports these comparisons; failure rates remain visible in Table~\ref{tab:main-results}.

\begin{table}[H]
\centering
\caption{CATS clearing (\%; mean $\pm$ bootstrap standard error) on the 5,000-instance holdout. The last column gives the paired FastCombo--Bayes difference in percentage points with a pointwise 95\% confidence interval. FC denotes FastCombo; CI denotes confidence interval.}
\label{tab:clearing-uncertainty}
\footnotesize
\setlength{\tabcolsep}{3pt}
\renewcommand{\arraystretch}{1.05}
\begin{tabular}{@{}lrrrrrr@{}}
\toprule
Domain & FastCombo & Bayes-GEN & SG-D & SG-I & Ceiling & FC$-$Bayes [95\% CI] \\
\midrule
Arbitrary & 68.74 $\pm$ 0.96 & 71.28 $\pm$ 0.93 & 42.46 $\pm$ 0.98 & 60.12 $\pm$ 1.04 & 72.28 & -2.54 [-3.03, -2.10] \\
Paths & 95.10 $\pm$ 0.42 & 95.46 $\pm$ 0.37 & 80.38 $\pm$ 0.71 & 96.64 $\pm$ 0.35 & 97.48 & -0.36 [-0.87, +0.15] \\
Regions & 94.23 $\pm$ 0.40 & 95.54 $\pm$ 0.34 & 74.86 $\pm$ 0.81 & 92.04 $\pm$ 0.46 & 96.22 & -1.31 [-1.68, -0.95] \\
Scheduling & 97.59 $\pm$ 0.24 & 98.36 $\pm$ 0.20 & 88.28 $\pm$ 0.52 & 97.92 $\pm$ 0.22 & 99.90 & -0.77 [-1.16, -0.39] \\
\bottomrule
\end{tabular}
\end{table}

\begin{table}[H]
\centering
\caption{Mean rounds on seed-specific common-cleared subsets, with one charged demand query per round. For each FastCombo seed, the subset contains instances cleared by that actor and all three baselines; entries average the ten seed-specific means. Entries are means $\pm$ bootstrap standard errors. Bayes-GEN denotes Bayes\(^{\mathrm{GEN}}\); SG-D and SG-I use domain-level and instance-level oracle tuning, respectively. Standard errors use 20,000 paired source-file and training-seed bootstrap resamples, recomputing each selected seed's common-set mean within each resample. Baselines use the same selected seeds and source files; their errors therefore include holdout variation, not only variation in subset membership. The last column reports the mean subset size and its range.}
\label{tab:common-cleared-queries}
\footnotesize
\setlength{\tabcolsep}{3pt}
\begin{tabular}{@{}lrrrrr@{}}
\toprule
Domain & FastCombo & Bayes-GEN & SG-D & SG-I & Common set size [range] \\
\midrule
Arbitrary & 6.59 $\pm$ 0.18 & 3.45 $\pm$ 0.07 & 29.78 $\pm$ 0.26 & 18.21 $\pm$ 0.27 & 2113.6 [2108--2118] \\
Paths & 7.95 $\pm$ 0.25 & 4.54 $\pm$ 0.12 & 16.93 $\pm$ 0.23 & 7.25 $\pm$ 0.12 & 3964.8 [3952--3975] \\
Regions & 4.80 $\pm$ 0.10 & 3.32 $\pm$ 0.05 & 21.99 $\pm$ 0.20 & 13.37 $\pm$ 0.17 & 3735.2 [3727--3740] \\
Scheduling & 4.64 $\pm$ 0.13 & 4.86 $\pm$ 0.06 & 16.32 $\pm$ 0.19 & 9.15 $\pm$ 0.13 & 4394.5 [4384--4403] \\
\bottomrule
\end{tabular}
\end{table}

\subsection{Online Computation}
\label{app:computation}

The computation study selects one auction uniformly from each of the 100 source files per domain, using seed 20260911 on the saved holdout.
It retains all 40 final actors and the benchmark's tolerance, round cap, and Bayesian prior samples.
Each Gaussian process is refitted with optimizer restarts seeded by the recorded Bayesian seed and then held fixed across configurations and repetitions.
The runtime comparison uses the recorded default Bayes\(^{\mathrm{GEN}}\) configuration: 128 MCEM samples, at most 30 iterations, \(\lambda=1\), native early convergence, and the charged second stage.
SG-Auction\(^{D}\) selects its coefficient on the other 4,900 holdout auctions per domain, excluding the timing sample.
Its coefficient is fixed before the timed evaluations; it has no oracle access to outcomes on those 100 measurement instances.
The timing study is therefore a separate 400-auction comparison.

Each auction receives an untimed reference execution and three timed repetitions, with randomized method order.
Methods run sequentially on the same assigned CPU, using one Torch/BLAS thread without batching.
The 40 ten-auction tasks used AMD EPYC 7543 or Intel Xeon Gold 5118, 6148, or 6238R processors; comparisons within an auction use the same processor.
Measured components are FastCombo prediction and price conversion, Bayesian MCEM price generation and expectation-propagation updates, and subgradient price arithmetic.
Every charged query remains included, including second-stage queries without an MCEM call.
Observation construction, clearing checks, loop bookkeeping, communication, Gaussian-process setup, model loading, and offline training are excluded.
Instrumented executions reproduce the uninstrumented outcomes and price traces on this population.

Cumulative component cost is the median over timing repetitions for each auction and actor; domain means average auctions and all ten FastCombo seeds, including failures.
The paired uncertainty procedure uses 20,000 joint bootstrap draws of source files and FastCombo seeds, with percentile 95\% intervals in Table~\ref{tab:compute-paired}. It conditions on the recorded fitted priors and median-of-three timings; timing repetitions are not resampled as independent auctions.
The 40 full training fits used 3,812.54 allocated CPU-hours including their in-job evaluation, with eight CPUs per fit and domain-average elapsed times of 10.05--13.67 hours.
These totals exclude earlier architecture and hyperparameter searches, do not measure utilization, and use hardware different from the timing study, so they do not determine a same-hardware break-even count.

\section{Architecture Controls and Ablations}
\label{app:ablations}

We evaluate five component ablations and a recurrent non-graph control under the training and evaluation protocol of Section~\ref{sec:cats}.
The control replaces the graph encoder with a shared dense network receiving the complete public observation, all item recurrent states, the local item state, and an item identifier.
It retains the twelve 32-dimensional item states updated by a shared GRU, 128-dimensional item tokens, price-head and critic-head architectures, and training-only access to private values.
Graph and dense models have 386,446 and 386,022 total training parameters (a 0.11\% difference); their stochastic actor paths contain 131,981 and 162,721 parameters.
The dense control was chosen among five recurrent candidates using development auctions only, then trained afresh for ten seeds per domain.

This comparison supports the complete structured encoder under the shared PPO protocol; it does not isolate message passing from parameter sharing, representation layout, or optimization effects.
Changing graph depth or the critic has smaller, domain-dependent effects, and removing bidder bounds changes clearing by less than half a percentage point.

Table~\ref{tab:ablation-effects} reports paired changes in clearing; the final architecture's absolute rates are in Table~\ref{tab:main-results}.
No-recurrence Paths uses nine matched seeds because one prespecified ablation run did not complete; all other cells use ten.
The non-graph comparison re-evaluates the final actors on the same instances; its Paths reference is 95.106\%, versus 95.104\% in the original benchmark (one clearance among 50,000 evaluations).
Paired differences use this re-evaluated reference; absolute final-architecture rates retain the original benchmark values.

The round comparisons use the original FastCombo benchmark outcomes, with matched seeds and the seed-specific intersection of successes across all three architectures. Mean common-set sizes (minimum--maximum over seeds) are 2,701.5 (2,460--2,848), 3,752.2 (3,575--3,927), 4,299.1 (4,171--4,381), and 4,484.5 (4,367--4,553) for Arbitrary, Paths, Regions, and Scheduling, respectively. Tables~\ref{tab:architecture-common-rounds} and~\ref{tab:architecture-capped-rounds} give 20,000-draw paired source-file/seed bootstrap intervals. The capped means use all matched-seed auctions and assign failures 50 rounds. Full-holdout clearing columns retain their original seed counts; Paths rounds use nine seeds for every architecture.

\begin{table}[H]
\centering
\caption{Paired change in clearing at raw gap tolerance $10^{-3}$ relative to the final depth-2 architecture (alternative minus final, in percentage points), with pointwise 95\% paired source-file and training-seed bootstrap intervals. Graph depth is the number of graph passes. Positive values favor the alternative. References and seed accounting follow Section~\ref{app:ablations}.}
\label{tab:ablation-effects}
\footnotesize
\setlength{\tabcolsep}{2.5pt}
\begin{tabular}{@{}lrrrr@{}}
\toprule
Alternative & Arbitrary & Paths & Regions & Scheduling \\
\addlinespace[2pt]
\midrule
No recurrence & -4.94 [-5.74, -4.15] & -5.33 [-6.13, -4.56] & -2.95 [-3.45, -2.49] & -2.68 [-3.19, -2.19] \\
\addlinespace[2pt]
No bidder bounds & +0.26 [-0.26, +0.80] & +0.36 [+0.03, +0.70] & +0.16 [-0.16, +0.52] & +0.45 [+0.09, +0.82] \\
\addlinespace[2pt]
\begin{tabular}[c]{@{}l@{}}Recurrent non-graph\\encoder\end{tabular} & \shortstack[r]{-12.61\\{}[-14.42, -10.96]} & \shortstack[r]{-18.52\\{}[-20.56, -16.48]} & \shortstack[r]{-6.60\\{}[-7.62, -5.63]} & \shortstack[r]{-6.79\\{}[-7.71, -5.89]} \\
\addlinespace[2pt]
\begin{tabular}[c]{@{}l@{}}One graph pass\\(depth 1)\end{tabular} & -0.94 [-1.45, -0.48] & -0.64 [-1.11, -0.16] & -0.12 [-0.49, +0.23] & +0.31 [+0.00, +0.66] \\
\addlinespace[2pt]
\begin{tabular}[c]{@{}l@{}}Three graph passes\\(depth 3)\end{tabular} & -1.06 [-1.54, -0.58] & -0.10 [-0.57, +0.34] & -0.65 [-1.02, -0.27] & -0.46 [-0.77, -0.17] \\
\addlinespace[2pt]
\begin{tabular}[c]{@{}l@{}}Public-information\\critic\end{tabular} & -0.61 [-1.19, -0.02] & -0.14 [-0.62, +0.33] & +0.05 [-0.21, +0.33] & +0.18 [-0.09, +0.47] \\
\addlinespace[2pt]
\bottomrule
\end{tabular}
\end{table}

\begin{table}[H]
\centering
\caption{Rounds on seed-specific three-method common-cleared subsets. Entries show mean and paired percentile 95\% bootstrap interval, using 20,000 source-file and matched-seed draws.}
\label{tab:architecture-common-rounds}
\small
\begin{tabular}{lrrr}
\toprule
Domain & FastCombo & Dense & No recurrence \\
\midrule
Arbitrary & 7.38 [7.08, 7.69] & 20.95 [19.23, 22.56] & 9.85 [9.09, 10.61] \\
Paths & 7.40 [6.97, 7.90] & 21.77 [20.01, 23.24] & 8.01 [7.59, 8.50] \\
Regions & 5.29 [5.07, 5.51] & 13.47 [12.71, 14.23] & 6.29 [5.99, 6.58] \\
Scheduling & 4.65 [4.42, 4.91] & 12.63 [11.82, 13.51] & 5.17 [4.97, 5.36] \\
\bottomrule
\end{tabular}
\end{table}

\begin{table}[H]
\centering
\caption{Capped mean rounds on all matched-seed auctions, charging failures 50 rounds. Entries show mean and paired percentile 95\% bootstrap interval; nine Paths seeds, ten elsewhere.}
\label{tab:architecture-capped-rounds}
\small
\begin{tabular}{lrrr}
\toprule
Domain & FastCombo & Dense & No recurrence \\
\midrule
Arbitrary & 22.43 [21.54, 23.31] & 33.89 [32.43, 35.21] & 25.49 [24.48, 26.48] \\
Paths & 11.92 [11.24, 12.65] & 28.51 [26.99, 29.91] & 14.01 [13.29, 14.76] \\
Regions & 8.90 [8.40, 9.40] & 18.24 [17.23, 19.27] & 10.73 [10.15, 11.32] \\
Scheduling & 6.77 [6.37, 7.19] & 16.25 [15.42, 17.14] & 8.12 [7.73, 8.53] \\
\bottomrule
\end{tabular}
\end{table}

\end{document}